\RequirePackage{etoolbox}
\newtoggle{lipics}
\newcommand{\lipicsonly}[1]{\iftoggle{lipics}{#1}{}}
\newcommand{\icfponly}[1]{\iftoggle{lipics}{}{#1}}

\toggletrue{lipics}
\documentclass[a4paper,UKenglish,cleveref]{lipics-v2021}

\hideLIPIcs

\author{David Knothe}
{FZI Research Center for Information Technology, Karlsruhe, Germany}
{knothe@fzi.de}
{https://orcid.org/0009-0007-6700-7337}
{}

\author{Oliver Bringmann}
{University of Tübingen, Tübingen, Germany \and FZI Research Center for Information Technology, Karlsruhe, Germany}
{bringmann@fzi.de}
{https://orcid.org/0000-0002-1615-507X}
{}

\authorrunning{D. Knothe and O. Bringmann}

\Copyright{David Knothe and Oliver Bringmann}
\ccsdesc[500]{Theory of computation~Program semantics}
\ccsdesc[300]{Software and its engineering~Compilers}
\ccsdesc[300]{Software and its engineering~Software verification}

\keywords{verified compilation, semantic preservation, loop unrolling, natural semantics, coinductive semantics, CompCert}

\nolinenumbers

\supplementdetails{Software}{https://github.com/knothed/CompCert-loop}

\EventEditors{John Q. Open and Joan R. Access}
\EventNoEds{2}
\EventLongTitle{42nd Conference on Very Important Topics (CVIT 2016)}
\EventShortTitle{CVIT 2016}
\EventAcronym{CVIT}
\EventYear{2016}
\EventDate{December 24--27, 2016}
\EventLocation{Little Whinging, United Kingdom}
\EventLogo{}
\SeriesVolume{42}
\ArticleNo{23}

\title{Combining Small-Step and Big-Step Semantics to Verify Loop Optimizations}

\usepackage[utf8]{inputenc}
\usepackage[T1]{fontenc}
\usepackage{amsmath}

\usepackage{amssymb}
\usepackage{amsthm}
\usepackage{etoolbox}
\usepackage{subcaption}
\usepackage{mathtools}
\usepackage{thmtools}
\usepackage{stmaryrd}
\usepackage{tikz}
\usepackage{makecell}
\usetikzlibrary{arrows.meta}
\usetikzlibrary{decorations.shapes}
\usetikzlibrary{tikzmark}
\usetikzlibrary{math}
\usetikzlibrary{cd}
\usetikzlibrary{nfold}
\usepackage{graphicx}
\usepackage{xcolor}
\usepackage{xspace}
\usepackage{framed}
\usepackage{bussproofs}
\usepackage{hyperref}
\usepackage{mdframed}
\usepackage{float}
\usepackage{afterpage}
\usepackage[frozencache,cachedir=.]{minted}

\let\oldqedsymbol\qedsymbol
\newcommand{\coqed}{\hfill${\small\textnormal{[Rocq] \oldqedsymbol}}$}
\newcommand{\localcoqed}{\renewcommand\qedsymbol{\small\textnormal{[Rocq] \oldqedsymbol}}}

\makeatletter
\newenvironment{centeredmath}{%
	\@fleqnfalse 
}{%
	\ignorespacesafterend
}

\makeatletter

\pgfarrowsdeclare{tonew}{tonew}
{
	\ifdim\pgfgetarrowoptions{tonew}=-1pt%
	\pgfutil@tempdima=0.84pt%
	\advance\pgfutil@tempdima by1.3\pgflinewidth%
	\pgfutil@tempdimb=0.21pt%
	\advance\pgfutil@tempdimb by.625\pgflinewidth%
	\else%
	\pgfutil@tempdima=\pgfgetarrowoptions{tonew}%
	\pgfarrowsleftextend{+-0.8\pgfutil@tempdima}%
	\pgfarrowsrightextend{+0.2\pgfutil@tempdima}%
	\fi%
}
{
	\ifdim\pgfgetarrowoptions{tonew}=-1pt%
	\pgfutil@tempdima=0.28pt%
	\advance\pgfutil@tempdima by.3\pgflinewidth%
	\pgfutil@tempdimb=0pt,%
	\else%
	\pgfutil@tempdima=\pgfgetarrowoptions{tonew}%
	\multiply\pgfutil@tempdima by 100%
	\divide\pgfutil@tempdima by 375%
	\pgfutil@tempdimb=0.4\pgflinewidth%
	\fi%
	\pgfsetdash{}{+0pt}
	\pgfsetroundcap
	\pgfsetroundjoin
	\pgfpathmoveto{\pgfpointorigin}
	\pgflineto{\pgfpointadd{\pgfpoint{0.75\pgfutil@tempdima}{0bp}}
		{\pgfqpoint{-2\pgfutil@tempdimb}{0bp}}}
	\pgfusepathqstroke
	\pgfsetlinewidth{0.8\pgflinewidth}
	\pgfpathmoveto{\pgfpointadd{\pgfqpoint{-3\pgfutil@tempdima}{4\pgfutil@tempdima}}
		{\pgfqpoint{\pgfutil@tempdimb}{0bp}}}
	\pgfpathcurveto
	{\pgfpointadd{\pgfqpoint{-2.75\pgfutil@tempdima}{2.5\pgfutil@tempdima}}
		{\pgfqpoint{0.5\pgfutil@tempdimb}{0bp}}}
	{\pgfpointadd{\pgfqpoint{0pt}{0.25\pgfutil@tempdima}}
		{\pgfqpoint{-0.5\pgfutil@tempdimb}{0bp}}}
	{\pgfpointadd{\pgfqpoint{0.75\pgfutil@tempdima}{0pt}}
		{\pgfqpoint{-\pgfutil@tempdimb}{0bp}}}
	\pgfpathcurveto
	{\pgfpointadd{\pgfqpoint{0pt}{-0.25\pgfutil@tempdima}}
		{\pgfqpoint{-0.5\pgfutil@tempdimb}{0bp}}}
	{\pgfpointadd{\pgfqpoint{-2.75\pgfutil@tempdima}{-2.5\pgfutil@tempdima}}
		{\pgfqpoint{0.5\pgfutil@tempdimb}{0bp}}}
	{\pgfpointadd{\pgfqpoint{-3\pgfutil@tempdima}{-4\pgfutil@tempdima}}
		{\pgfqpoint{\pgfutil@tempdimb}{0bp}}}
	\pgfusepathqstroke
}

\makeatother

\pgfsetarrowoptions{tonew}{-1pt}
\pgfkeys{/tikz/.cd, arrowhead/.default=-1pt, arrowhead/.code={
		\pgfsetarrowoptions{tonew}{#1},
}}

\tikzset{decorate sep/.style 2 args=
	{decorate,decoration={shape backgrounds,shape=circle,shape size=#1,shape sep=#2}}}

\newcommand{\dotDownarrow}{%
	\mathrel{%
		\hspace{0.2em}%
		\tikz[baseline={(0,-0ex)}]{
			\draw[decorate sep={0.07mm}{0.57mm},fill]
			(-0.2ex,1.5ex) -- (-0.2ex,0.3ex);%
			\draw[decorate sep={0.07mm}{0.57mm},fill]
			(0.2ex,1.5ex) -- (0.2ex,0.3ex);%
			\begin{scope}[transform canvas={yscale=1.4}]
				\draw[-tonew, arrowhead=2.5pt]
				(0,-0.249ex) -- (0,-0.25ex);%
			\end{scope}
		}%
		\hspace{0.2em}%
	}%
}

\newcommand{\BSEval}[1]{\Downarrow^{#1}}
\newcommand{\BSEvalinf}[1]{\Downarrow^{#1}_\infty}
\newcommand{\BSPartial}[1]{\dotDownarrow^{#1}}
\newcommand{\BSEvalinfx}[2]{\prescript{\scriptscriptstyle (\!#2\!)\hspace{-.2em}}{}{\Downarrow^{#1}_\infty}}

\newcommand{\SSStep}[1]{\overset{#1}{\rightarrow}}
\newcommand{\SSSteptext}[1]{{\rightarrow}^{#1}}
\newcommand{\SSStar}[1]{{\overset{#1}{\longrightarrow}}\!^{\scriptstyle *}}
\newcommand{\SSForever}[1]{{\overset{#1}{\longrightarrow}}\!_{\scriptstyle \infty}}
\newcommand{\SSForeverx}[2]{{\overunderset{#1}{\smash{\raisebox{0.4em}{$\scriptscriptstyle (\!#2\!)$}}}{\longrightarrow}}\!_{\scriptstyle \infty}}

\newcommand{\forever}[1]{{\text{div}}^{#1}}
\newcommand{\foreversilent}{{\text{div\_silent}}^{\emptyTrace}}
\newcommand{\foreverfinite}[1]{{\text{div\_fin}}^{#1}}
\newcommand{\reacts}[1]{{\text{reacts}}^{#1}}

\newcommand{\emptyTrace}{\varepsilon}
\newcommand{\traceConcat}{\cdot}
\newcommand{\prefixStrict}{\! < \!}
\newcommand{\prefix}{\! \leq \!}
\newcommand{\neqTrace}{\! \neq \!}
\newcommand{\refines}{\preccurlyeq}
\newcommand{\stateAbstr}{\sigma}

\DeclareMathOperator{\Beh}{Beh}
\DeclareMathOperator{\behOk}{\texttt{OK}}
\DeclareMathOperator{\behDiv}{\infty}
\DeclareMathOperator{\behErr}{\texttt{err}}

\newcommand{\fwpreserv}{\rightrightarrows}
\newcommand{\bwpreserv}{\leftleftarrows}

\newcommand{\Dfrac}[2]{%
	\ooalign{%
		$\genfrac{}{}{2.2pt}0{\hphantom{#1}}{\hphantom{#2}}$\cr%
		$\color{white}\genfrac{}{}{1pt}0{\color{black}{#1}}{\color{black}{#2}}$}%
}
\newcommand{\Sfrac}[2]{%
	\ooalign{%
		$\color{black}\genfrac{}{}{0.6pt}0{#1}{#2}$
	}
}

\newcommand{\rulefignoline}[2]{
	\vspace{2pt}
	\begin{minipage}{#1\textwidth}
		\begin{centeredmath}
			\[ #2 \]
		\end{centeredmath}
	\end{minipage}
	\vspace{2pt}
}

\newcommand{\rulefig}[3]{
	\vspace{2pt}
	\begin{minipage}{#1\textwidth}
		\centering
		\begin{centeredmath}
			\[ \Sfrac{#2}{#3} \]
		\end{centeredmath}
	\end{minipage}
	\vspace{2pt}
}

\newcommand{\rulefigannot}[4]{
	\vspace{2pt}
	\begin{minipage}{#1\textwidth}
		\centering
		\begin{centeredmath}
			\[ [#2] \Sfrac{#3}{#4} \]
		\end{centeredmath}
	\end{minipage}
	\vspace{2pt}
}

\newcommand{\rulefigdouble}[3]{
	\vspace{2pt}
	\begin{minipage}{#1\textwidth}
		\centering 
		\begin{centeredmath}
			\[ \Dfrac{#2}{#3} \]
		\end{centeredmath}
	\end{minipage}
	\vspace{2pt}
}

\newcommand{\rulefigdoubleannot}[4]{
	\vspace{2pt}
	\begin{minipage}{#1\textwidth}
		\centering 
		\begin{centeredmath}
			\[ [#2] \Dfrac{#3}{#4} \]
		\end{centeredmath}
	\end{minipage}
	\vspace{2pt}
}

\newcommand{\indep}[2]{\textit{indep}\ #1\ #2}
\DeclareMathOperator{\rep}{repeat}
\newcommand{\BSLoop}[2]{\mathop{\prescript{\LOOP}{\scriptstyle #1}{\Downarrow}^{#2}}}
\newcommand{\BSLoopinf}[2]{\mathop{\prescript{\LOOP}{\scriptstyle #1}{\Downarrow}^{#2}_\infty}}

\newcommand{\Cminor}{\texttt{Cminor}\xspace}
\newcommand{\eval}[2]{\text{eval}\ #1\ #2}
\newcommand{\extcall}[2]{\text{external\_call}\ #1\ #2}
\newcommand{\istrue}[1]{\text{is\_true}\ #1}
\newcommand{\isfalse}[1]{\text{is\_false}\ #1}

\newcommand{\kseq}{\boldsymbol{\cdot}}
\DeclareMathOperator{\SKIP}{\mathsf{nop}}
\newcommand{\STORE}[2]{#1 \leftarrow #2}
\newcommand{\IFTHENELSE}[3]{#1\ ?\ #2 : #3}
\newcommand{\SEQ}[2]{#1 \kseq #2}
\DeclareMathOperator{\LOOP}{\circlearrowright}
\newcommand{\BLOCK}[1]{\llbracket {#1} \rrbracket}
\DeclareMathOperator{\EXIT}{\mathsf{exit}}
\DeclareMathOperator{\EXTCALL}{\mathsf{unsafe}}

\DeclareMathOperator{\outOk}{\texttt{OK}}
\newcommand{\outExit}[2]{\texttt{Exit}\ #1\ #2}
\DeclareMathOperator{\outPartial}{\texttt{Partial}}

\newcommand{\Kseq}[2]{#1 \kseq #2}
\newcommand{\Kstop}{\mathsf{stop}}
\newcommand{\Kblock}[1]{\llbracket #1 \rrbracket}

\newcommand{\traceCALL}[2]{\texttt{CALL #1 #2}}

\newcommand{\BSState}[2]{\langle #1,\ #2 \rangle}
\newcommand{\SSState}[3]{\langle #1,\ #2,\ #3 \rangle}

\newcommand{\xdownarrow}[1]{\left\downarrow\vbox to #1{}\right.\kern-\nulldelimiterspace}

\newcommand{\coord}[1]{
	\tikz[remember picture]{
		\coordinate (#1) at (0,0);
	}
}
	
\begin{document}

	\lipicsonly{\maketitle}
	
	\begin{abstract}
Verified compilers aim to guarantee that compilation preserves the observable behavior of source programs.
While small-step semantics are widely used in such compilers, they are not always the most convenient framework for structural transformations such as loop optimizations.
This paper proposes an approach that leverages both small-step and big-step semantics: small-step semantics are used for local transformations, while big-step semantics are employed for structural transformations.

An abstract behavioral semantics is introduced as a common interface between the two styles.
Coinductive big-step semantics is extended to correctly handle divergence with both finite and infinite traces, bringing it on par with the expressiveness of small-step semantics.
This enables the insertion of big-step transformations into the middle of an existing small-step pipeline, thereby fully preserving all top-level semantic preservation theorems.
This approach is practically demonstrated in CompCert by implementing and verifying a few new loop optimizations in big-step Cminor, including loop unswitching and, notably, full loop unrolling.

	\end{abstract}

	\icfponly{\maketitle}

	\section{Introduction}

\emph{Can you trust your compiler?}
This prominent question served as the opening of several influential papers by Leroy and others that introduced CompCert, the first realistic and commercially available verified compiler \cite{compcert-main, compcert-2, compcert-3, compcert-4}.
The objective and value of a verified compiler are clear: through a machine-checked proof of all its transformations and optimizations, many of which are quite complex, bugs in these components are systematically ruled out.
This effectively closes the gap between the correct functionality of a program's written source code and that of the resulting compiled machine code.
While such an approach may be overkill for most ordinary applications, it can be of crucial importance for safety-critical software.

A desirable correctness statement could look like follows:
\emph{If a source program $S$ is compiled into an executable program $C$, then the observable behaviors of $S$ are the same as those of $C$.}

In order to reason about program behavior at all, both the source and the target language must be equipped with formal semantics that specify an execution model.
Two kinds of semantics are commonly used for this purpose: operational (small-step) semantics, which describe the (often hardware-close) step-by-step execution of a program, and natural (big-step) semantics, which adopt a more conceptual, high-level view of execution.
When reasoning about the correctness of a transformation, these two kinds of semantics are therefore naturally suited to different purposes.
Small-step semantics are particularly well suited for transformations that modify code only locally (even if a global analysis is required), such as instruction selection or constant propagation.
Big-step semantics, by contrast, are more appropriate for transformations that operate on the structure of the code, such as inlining or, prominently, loop optimizations.
In this context, Bertot et~al.\ wrote in 2004 \cite{compcert-2}:
\begin{quote}
	 ``\dots we envision to perform some optimizations such as loop optimizations not on the unstructured RTL intermediate code, but on the structured Cminor source code, whose big-step semantics makes it easier to reorder computations without worrying about intermediate computational states that do not match.''
\end{quote}

However, the big-step semantics originally used in CompCert have been completely replaced by small-step semantics in the meantime, with the justification that they are more expressive and provide a common framework for preservation proofs \cite{cc-transition-1, cc-transition-2}.
Consequently, the official CompCert development does not include any loop transformations.
In recent years, methods have been proposed to perform loop optimizations on small-step semantics \cite{cfg-morphisms-1, cfg-morphisms-2, cfg-morphisms-3}.
While these approaches are powerful and promising, they are still complicated due to the reliance on an interpreter and they only work for a class of transformations that are morphisms on the control-flow graph.

The goal of this paper is to demonstrate that big-step semantics are not only better suited for some types of structural transformations than small-step semantics, but can also be integrated into an existing small-step semantics pipeline, while preserving all statements about program behavior.
Using this approach, we verify several new loop transformations in CompCert.
Our methodology and contributions can be summarized as follows:
\begin{enumerate}
	\item We abstract program behavior from both small-step and big-step semantics, creating a minimal common interface between the two. We define properties and prove preservation theorems for this so-called behavioral semantics (\cref{section-2}).
	\item We extend the coinductive big-step semantics of Leroy \cite{coind-bs-leroy} by allowing both finite and infinite traces in the divergence judgment, making it as expressive as that of small-step semantics (\cref{section-coinductive}).
	\item We implement several loop transformations in big-step \Cminor \cite{cminor}, including loop unswitching and full loop unrolling (\cref{section-loop-trafos}). To our knowledge, these transformations have not been verified before in CompCert. We discuss the strengths and limitations of the big-step approach (\cref{section-discussion}).
	\item We prove behavioral equivalence between big-step and small-step \Cminor (\cref{section-cminor}) and integrate all of the above into CompCert (\cref{section-pipeline-integration}). We thereby preserve all of its top-level correctness theorems in their full strength. The development is publicly available \cite{compcert-loop}.
\end{enumerate}
Our contribution is not limited to CompCert: we want to advocate the idea of \textbf{proving certain transformations using small-step semantics and others using big-step semantics} for verified compilers in general.
The behavioral abstractions we employ, the extension of Leroy's coinductive big-step semantics and the thereby enabled full-strength reasoning about big-step behavior can be applied beyond CompCert.
Nonetheless, we use CompCert here as a concrete and illustrative example of our approach.
	\section{Program Behavior}
\label{section-2}

The overarching goal of a verified compiler is to make guarantees about the behavior of a compiled program w.~r.~t. the behavior of the original program.
Thereby, the notion of program behavior is intended to capture two essential aspects: first, whether the program terminates, diverges or goes wrong with undefined behavior; and second, the observable \emph{trace} that is produced during execution.
The trace is a (possibly infinite) sequence of externally visible \emph{events} like calls to \texttt{read} or \texttt{printf}.
It does not capture every single performed operation as this would be too restrictive to allow compiler optimizations such as reordering of operations or changing control flow.
A sensible convention is therefore that the trace captures all calls to external functions as these may be externally visible and no guarantees can be made about their behavior in general.

Let $p$ be a program.%
\footnote{
	We always implicitly include the initial state of the execution environment such as memory and registers.
}
We use the three predicates $\Beh_p(t, \behOk)$, $\Beh_p(\tau, \behDiv)$ and $\Beh_p(t, \behErr)$ to express that $p$ terminates, diverges or goes wrong, respectively.
Note that while termination and going wrong both admit a finite trace $t$, divergence gives rise to a \emph{general} trace $\tau$, that is, either a finite or an infinite trace.
To see why, consider a diverging \texttt{while true}-loop. It may never emit any event or it may call external functions infinitely often (i.~e., more often than any finite number of times).

In general, a program may have multiple behaviors, as external calls can be non-deterministic. This is described in more detail in \cref{section-determinate-receptive} using the \emph{determinacy} property.
To be able to talk about behavior at all, we must equip a program (or an intermediate representation in the compiler) with a semantics. Two prevalent types of semantics for this purpose are small-step (or operational) and big-step (or natural) semantics.

A small-step semantics is characterized by its execution steps.
The judgment $\sigma_1 \SSStep t \sigma_2$ denotes a step from program state $\sigma_1$ to $\sigma_2$, which normally corresponds to the execution of a single operation.
The step emits a trace $t$, which may be the empty trace $\emptyTrace$.
A program $p$ terminates if its execution eventually reaches a final state.
If execution continues indefinitely, $p$ diverges.
If there is no possible step from a non-final state ($\stateAbstr \not \SSStep{}$), $p$ has gone wrong.

Instead of step-by-step execution, a big-step semantics defines termination and divergence in a recursive fashion.
For example, the statement $s_1 \cdot s_2$ terminates if first $s_1$ and then $s_2$ execute successfully.
As described by Leroy \cite{coind-bs-leroy}, a big-step semantics can be defined with an inductive predicate for termination and a coinductive predicate for divergence, which is also the route that CompCert takes.%
\footnote{
	Due to their different natures, these two predicates cannot easily be combined into a single one in Rocq.
	There are a few approaches to defining a unified big-step semantics, such as \cite{pretty-bs} and \cite{pretty-bs-2}, which we will not address here.
}
Typically, big-step semantics lack a going-wrong judgment.

\begin{note}
	We will write $t$ for a finite trace, $\omega$ for an infinite trace and $\tau$ for a (general) trace.
\end{note}
	\subsection{Behavioral Semantics}
\label{section-beh-sem}

We now define \emph{behavioral semantics}. Here, going-wrong is implicitly expressed via partial executions.
Later, we show how both small-step and big-step semantics can be interpreted as behavioral semantics.

Given a program $p$, the behavioral semantics of $p$ is defined by the three predicates ${p \BSEval t \stateAbstr}$ (termimation with final state $\stateAbstr$), $p \BSEvalinf \tau$ (divergence) and $p \BSPartial t$ ($p$ has a partial evaluation with finite trace $t$).
Partial evaluation is required to satisfy the following three properties:
\iftoggle{lipics}{\begin{bracketenumerate}}{\begin{enumerate}}
	\item $p \BSPartial \varepsilon$ holds.
	\item Let $\tau$ be a trace such that either $p \BSEval \tau \stateAbstr$, $p \BSEvalinf \tau$, or $p \BSPartial \tau$ holds. Then, for every finite prefix $t \prefix \tau$, we have $p \BSPartial t$.
	\item Compactness: Let $\omega$ be an infinite trace and assume that $\forall t \prefixStrict \omega: p \BSPartial t$. Then $p \BSEvalinf \omega$ holds.
\iftoggle{lipics}{\end{bracketenumerate}}{\end{enumerate}}

\begin{definition}
\begin{mdframed}[hidealllines=true,nobreak=true,leftmargin=0pt,innerleftmargin=0pt,innertopmargin=0pt]
\noindent The behavior $\Beh_p$ of $p$ is then defined as follows:
\begin{itemize}
	\item $\Beh_p(t, \behOk \stateAbstr) :\Leftrightarrow p \BSEval t \stateAbstr$
	\item $\Beh_p(\tau, \behDiv) :\Leftrightarrow p \BSEvalinf \tau$
	\item $\Beh_p(t, \behErr) :\Leftrightarrow p \BSPartial t \text{and } \neg
	\big(
	p \BSEval t \stateAbstr\ \vee\ p \BSEvalinf t\ \vee\ \ \exists t' \neq \emptyTrace,\ p \BSPartial {t \cdot t'}\! \big)$.
\end{itemize}
\end{mdframed}
\end{definition}

An important implication is that every partial execution gives rise to a behavior. Consequently, by above property (1), every program has at least one behavior:
\begin{lemma}
	If $p \BSPartial t$ then there is a behavior $b$ and a trace $\tau$ such that $\Beh_p(t \traceConcat \tau, b)$.
	\label[lemma]{lemma-partial-has-behavior}
\end{lemma}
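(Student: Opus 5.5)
The plan is a classical case analysis, extending the partial execution $p \BSPartial t$ as far as possible. Either $t$ can be extended finitely until it gets stuck or terminates, or it can be extended forever, in which case compactness yields divergence. This is non-constructive (it uses excluded middle and dependent choice), which is acceptable given the definition of $\Beh_p(t,\behErr)$ as a negation.

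First I ask whether there is a finite extension $t \traceConcat t_1$ with $p \BSPartial {t \traceConcat t_1}$ that is \emph{maximal}. Maximal means that $p \BSPartial {t \traceConcat t_1 \traceConcat t'}$ implies $t' = \emptyTrace$. If such a $t_1$ exists, set $u = t \traceConcat t_1$ and split by excluded middle:
\begin{itemize}
\item If $p \BSEval u \stateAbstr$ for some $\stateAbstr$, we get $\Beh_p(u, \behOk \stateAbstr)$.
\item If $p \BSEvalinf u$, we get $\Beh_p(u, \behDiv)$.
\item Otherwise all three disjuncts in the definition of $\behErr$ fail. The third fails by maximality, so $\Beh_p(u, \behErr)$ holds.
\end{itemize}
In each case the trace is $t \traceConcat t_1$, so $\tau := t_1$ works.

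If no such maximal extension exists, then every partial trace $u \geq t$ admits a strictly longer partial trace $u \traceConcat t'$ with $t' \neq \emptyTrace$. Using dependent (countable) choice, I build a sequence $t = u_0 < u_1 < u_2 < \cdots$ of finite traces with $p \BSPartial {u_i}$ for all $i$. Each step adds at least one event, so the lengths are unbounded and the sequence determines an infinite limit trace $\omega = t \traceConcat \omega'$ with every $u_i$ a prefix.

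Any finite $s \prefixStrict \omega$ is a prefix of some $u_i$. Property (2) applied to $p \BSPartial {u_i}$ then gives $p \BSPartial s$. By compactness, property (3), we conclude $p \BSEvalinf \omega$, hence $\Beh_p(t \traceConcat \omega', \behDiv)$.

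The main obstacle is this second case. Constructing the limit trace $\omega$ from the chain and checking that every finite prefix of $\omega$ is covered requires coinductive or choice reasoning, which is delicate in Rocq. I would define $\omega$ by corecursion on a choice function $f$ that returns a nonempty extension. This needs the axiom of choice on the $\exists t'$ statement, or a constructive-indefinite-description axiom. Proving the prefix property then goes by induction on the length of $s$.
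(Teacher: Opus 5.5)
Your proof is correct and follows the paper's argument. The paper uses the same classical dichotomy: either there is a maximal finite extension $t \traceConcat t'$, which is then split into termination, finite-trace divergence, or $\behErr$ by definition, or there is an infinite extension all of whose finite prefixes are partial, where compactness gives $p \BSEvalinf{t \traceConcat \omega}$. The only difference is that you spell out the dependent-choice construction of the limit trace, which the paper compresses into ``classically and using property (2)''.
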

\begin{proof}
	Classically%
	\footnote{
		The fundamental \cref{lemma-partial-has-behavior} and therefore many of our theorems are proven classically rather than constructively.
		However, the same holds in CompCert: some key theorems about behaviors, e.~g.~that small-step simulation implies behavior preservation, are proven classically.
	}
	and using property (2) one of the following holds:
	\begin{enumerate}
		\item $\exists \omega$ infinite such that $\forall t' \prefixStrict \omega, p \BSPartial {t \traceConcat t'}\!$.
		By compactness (3) it follows that $p \BSPartial {t \traceConcat \omega}$.
		\item $\exists t'$ finite such that $p \BSPartial {t \traceConcat t'} \wedge\ \forall \tilde t \neqTrace \emptyTrace, \neg (p \BSPartial {t \traceConcat t' \traceConcat \tilde t})$.
		If $p \BSEval {t \traceConcat t'} \stateAbstr$ or $p \BSEvalinf {t \traceConcat t'}$ holds, we are done.
		Otherwise, it follows $\Beh_p(t \traceConcat t', \behErr)$ by definition. \qedhere
	\end{enumerate}
\end{proof}
	\subsection{Properties and Preservation}
\label{section-beh-sem-properties}

\iftoggle{lipics}{\paragraph*{Determinacy and Receptiveness}}{\subsubsection{Determinacy and Receptiveness}}
\label{section-determinate-receptive}

A typical source of nondeterminism is the ability to invoke external functions.
The return value of, for example, a \texttt{read} call is not known in advance.
A semantics is called \emph{determinate} if external calls are the only source of nondeterminism.
In a small-step semantics $\SSStep{}$, determinacy is defined as follows \cite{compcert-tso}:
if there are two possible steps $s \SSSteptext {t_1} s_1$ and $s \SSSteptext {t_2} s_2$, then $t_1 \asymp t_2$ and $t_1 = t_2 \Rightarrow s_1 = s_2$.
Here, ${t_1 \asymp t_2}$ means that the two traces \emph{match}: $|t_1| = |t_2|$ and,  if they represent an external call, then the call arguments coincide, while the return values may differ.%
\footnote{We additionally require that a single step can produce at most one event, i.~e., $s \SSSteptext t s' \Rightarrow |t| \leq 1$.}
Consequently, in a determinate semantics, multiple different executions can arise only after steps that emit a trace event.

Receptiveness is a complementary property:
$s \SSSteptext {t_1} s_1 \wedge t_1 \asymp t_2 \Rightarrow \exists s_2, s \SSSteptext {t_2} s_2$.
CompCert assumes determinacy and receptiveness for all external functions.
We now define analogous properties for a behavioral semantics:

\begin{definition}[Determinacy and Receptiveness] A behavioral semantics on $p$ is
	\iftoggle{lipics}{\begin{alphaenumerate}}{\begin{itemize}}
		\item \emph{determinate}, when:
		\begin{enumerate}
			\item given $|e_{1,2}|=1,\ p \BSPartial {t \traceConcat e_1} \wedge\ p \BSPartial {t \traceConcat e_2}\ \Rightarrow e_1 \asymp e_2$
			\item $\Beh_p(t, b_1) \wedge \Beh_p(t \traceConcat \tau, b_2) \Rightarrow \tau = \emptyTrace \wedge b_1 = b_2$.
		\end{enumerate}
		\item \emph{receptive}, when:
		given $|e_{1,2}|=1,\ p \BSPartial {t \traceConcat e_1} \wedge\ e_1 \asymp e_2 \Rightarrow p \BSPartial{t \traceConcat e_2}$
	\iftoggle{lipics}{\end{alphaenumerate}}{\end{itemize}}
\end{definition}

\iftoggle{lipics}{\paragraph*{Behavior Preservation}}{\subsubsection{Behavior Preservation}}

\begin{definition}[Refinement]
	A behavior $(\tau_1,b_1)$ \emph{refines} a behavior $(\tau_2,b_2)$ — denoted $(\tau_1, b_1) \refines (\tau_2, b_2)$ — if either $\tau_1=\tau_2 \wedge b_1=b_2$, or ${b_2 = \behErr} \wedge {\tau_2 \leq \tau_1}$ (with $\tau_2$ finite).
\end{definition}


\begin{definition}[Preservation]
	Let $\Beh_p$ and $\Beh_{p'}$ be two behavioral semantics.
	\begin{enumerate}
		\item $p \fwpreserv p'$ (forward preservation)
		$:\Leftrightarrow$ for every behavior $\Beh_p(\tau, b)$ there exists a behavior $\Beh_{p'}(\tau',b')$ such that $(\tau',b') \refines (\tau,b)$.
		\item $p \bwpreserv p'$ (backward preservation)
		$:\Leftrightarrow$ for every behavior $\Beh_{p'}(\tau', b')$ there exists a behavior $\Beh_{p}(\tau,b)$ such that $(\tau',b') \refines (\tau,b)$.
		\item $p \equiv p'$ (equivalence)
		$:\Leftrightarrow$ for all $(\tau,b)$, $\Beh_p(\tau,b) \Leftrightarrow \Beh_{p'}(\tau,b)$.
	\end{enumerate}
\end{definition}

If $p$ is compiled (or transformed) into $p'$, then $p \bwpreserv p'$ is the most desirable correctness property: every behavior of $p'$ is either also a behavior of $p$ or a refinement of a behavior of $p$ \cite{compcert-main}.
In particular, the compiler may replace undefined behavior in $p$ with some other behavior in $p'$, but it must not alter the observable trace up to the point where the undefined behavior occurs.
Refinement and the three forms of preservation are transitive.

Forward preservation of a transformation can be proven very naturally by showing implications between the three behavior predicates (see \cref{section-loop-trafos} for examples):

\begin{lemma}
\label[lemma]{lemma-make-forward-preservation}
	$p \fwpreserv p' \Leftrightarrow \left[ {(p \BSEval{t} \stateAbstr \Rightarrow p' \BSEval{t} \stateAbstr) \wedge (p \BSEvalinf{\tau} \Rightarrow p' \BSEvalinf{\tau}) \wedge (p \BSPartial{t} \Rightarrow p' \BSPartial{t})} \right]$. \coqed%
	\footnote{
		Detailed proofs can be found in our CompCert extension \cite{compcert-loop}.
	}
\end{lemma}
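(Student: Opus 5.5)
The proof goes by proving the two implications separately. Both directions unfold the definition of $\Beh$. The two tools are \cref{lemma-partial-has-behavior} and property (2) of partial evaluation, which says that partial executions are closed under finite prefixes of any trace occurring in one of the three predicates.

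For ($\Leftarrow$), assume the three implications and take a behavior $\Beh_p(\tau,b)$; we split on $b$.
\begin{itemize}
	\item If $b=\behOk \stateAbstr$, then $p \BSEval{t}\stateAbstr$ with $\tau = t$. Hence $p' \BSEval{t} \stateAbstr$, which is the same behavior, and refinement is trivial.
	\item Divergence is handled identically.
	\item If $b=\behErr$ with $\tau = t$, the definition gives $p \BSPartial t$, so $p' \BSPartial t$. By \cref{lemma-partial-has-behavior} there are $b'$ and $\tau'$ with $\Beh_{p'}(t \traceConcat \tau', b')$. Since $t \prefix t\traceConcat\tau'$ and $b=\behErr$, we get $(t\traceConcat\tau', b') \refines (t,\behErr)$.
\end{itemize}

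For ($\Rightarrow$), assume $p \fwpreserv p'$ and treat the three implications in turn.
\begin{itemize}
	\item If $p \BSEval t \stateAbstr$, then $\Beh_p(t,\behOk\stateAbstr)$. So there is $\Beh_{p'}(\tau',b') \refines (t,\behOk\stateAbstr)$. The target behavior is not $\behErr$, so refinement forces equality, i.e.\ $p' \BSEval t \stateAbstr$.
	\item The divergence case is the same.
	\item The partial case is the only one needing care, because $p \BSPartial t$ need not itself be a behavior. First apply \cref{lemma-partial-has-behavior} to get $\Beh_p(t\traceConcat\tau, b)$. Forward preservation then gives $\Beh_{p'}(\tau', b')$ with $(\tau',b') \refines (t\traceConcat\tau,b)$. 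In either branch of refinement we have $t \prefix t\traceConcat \tau \prefix \tau'$: in the equality branch trivially, and in the error branch by definition.
	\item To conclude, observe that any behavior $\Beh_{p'}(\tau',b')$ has $\tau'$ occurring in one of the predicates $p'\BSEval{\tau'}\stateAbstr$, $p'\BSEvalinf{\tau'}$ or $p'\BSPartial{\tau'}$; the last holds when $b'=\behErr$, by the definition of $\Beh$. Property (2) then yields $p' \BSPartial t$.
\end{itemize}

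The main obstacle is this last step, lifting a partial execution through a preservation statement that only speaks about complete behaviors. It is resolved by first completing the partial execution to a behavior using \cref{lemma-partial-has-behavior}, which is classical, and then projecting back to a prefix via property (2). Everything else is unfolding definitions.
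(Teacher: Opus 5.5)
Your proof is correct: you unfold $\Beh$ and $\refines$, use \cref{lemma-partial-has-behavior} to complete a partial execution to a full behavior (error case of ($\Leftarrow$), partial case of ($\Rightarrow$)), and project back to the prefix $t$ via property (2). The paper gives no written proof and defers to its Rocq development, but it states \cref{lemma-partial-has-behavior} just beforehand, evidently for this purpose, so your argument is essentially the intended one.
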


Luckily, the other forms of preservation can be derived from forward preservation alone:

\begin{theorem}
	\label{theorem-fw-to-bw}
	Let $p$ be receptive and $p'$ determinate. Then, ${(p \fwpreserv p')} \Rightarrow {(p \bwpreserv p')}$.
\end{theorem}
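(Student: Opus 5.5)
Fix a behavior $\Beh_{p'}(\tau', b')$; we must find a behavior $\Beh_p(\tau,b)$ with $(\tau',b') \refines (\tau,b)$. By property (2) every finite prefix $t \prefix \tau'$ satisfies $p' \BSPartial t$. The plan is to transport partial executions backwards from $p'$ to $p$ as far as possible, and then use \cref{lemma-partial-has-behavior} on $p$ together with forward preservation and determinacy of $p'$ to pin down the resulting behavior.

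The key intermediate claim is: \emph{if $p \BSPartial t$, $p' \BSPartial {t \traceConcat e}$ with $|e|=1$, then either $p \BSPartial {t\traceConcat e}$, or $p$ has an $\behErr$ behavior with trace $t_0 \prefix t$.} To prove it, apply \cref{lemma-partial-has-behavior} to $p \BSPartial t$, obtaining $\Beh_p(t\traceConcat\sigma, b)$. If $b=\behErr$ and $\sigma=\emptyTrace$ we are in the second case. Otherwise, forward preservation gives a behavior of $p'$ refining it. If that refinement is an equality, $p'$ has behavior $(t\traceConcat\sigma,b)$ with $b \neq \behErr$ or $\sigma\neq\emptyTrace$. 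If $\sigma=\emptyTrace$, then $b$ is $\behOk$ or $\behDiv$ with trace $t$, and since $p'\BSPartial{t\traceConcat e}$ also yields (by \cref{lemma-partial-has-behavior}) a behavior with trace extending $t\traceConcat e$, determinacy (b) of $p'$ forces $e=\emptyTrace$, a contradiction. If $\sigma$ starts with event $e_1$, then $p \BSPartial{t\traceConcat e_1}$ and $p'\BSPartial{t\traceConcat e_1}$ by (2), so determinacy (a) of $p'$ gives $e_1\asymp e$, and receptiveness of $p$ gives $p\BSPartial{t\traceConcat e}$. If the refinement is strict, the error behavior is on the $p$ side only when $b=\behErr$, which was handled above; so it is $p'$ having some behavior refining ours, and the same two subcases apply with $p'$'s trace extending $t\traceConcat\sigma$.

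With the claim, argue by cases on whether some finite prefix of $\tau'$ is not a partial trace of $p$. If one exists, take the longest $t \prefix \tau'$ with $p\BSPartial t$ (exists since prefixes are downward-closed and $\emptyTrace$ works); the next event of $\tau'$ fails to lift, so the claim yields $\Beh_p(t_0,\behErr)$ with $t_0 \prefix t \prefix \tau'$, which is a refinement. If all finite prefixes of $\tau'$ are partial traces of $p$: when $\tau'$ is infinite, compactness (3) gives $p\BSEvalinf{\tau'}$, hence $\Beh_p(\tau',\behDiv)$ and $b'=\behDiv$ necessarily. When $\tau'$ is finite, apply \cref{lemma-partial-has-behavior} to $p\BSPartial{\tau'}$, get $\Beh_p(\tau'\traceConcat\sigma,b)$; if it is an error with trace $\prefix\tau'$ we are done, else push it forward and use determinacy (b) of $p'$ against $\Beh_{p'}(\tau',b')$ to conclude $\sigma=\emptyTrace$ and $b=b'$ (handling the refined-error case of forward preservation as in the claim, where again determinacy (b) forces equality of traces).

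The main obstacle I expect is the bookkeeping in the claim: forward preservation only gives a refining behavior, so each case must track whether the resulting $p'$-behavior's trace is an extension of the $p$-trace, and the $\behErr$ case with empty remainder must be peeled off first. Classical reasoning (choosing the longest liftable prefix, the case split on $\tau'$) is used freely, as in \cref{lemma-partial-has-behavior}.
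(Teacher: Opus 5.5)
Your proof is correct and follows the paper's argument in substance. Both lift partial traces of $p'$ back to $p$ one event at a time using determinacy (a) of $p'$ and receptiveness of $p$, fall back to an $\behErr$ behavior of $p$ when this fails, and then pass to full behaviors with compactness and determinacy (b) of $p'$. The differences are minor. You take a longest liftable prefix instead of inducting on $|t|$. You also get the competing event from a full behavior of $p$ (via \cref{lemma-partial-has-behavior} and the refinement definition), where the paper uses a classical case split together with \cref{lemma-make-forward-preservation}. One point in your favour: you explicitly rule out $p \BSEval t \stateAbstr$ and $p \BSEvalinf t$ using determinacy (b). The paper's sketch leaves this implicit when it concludes $\Beh_p(t,\behErr)$.
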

\begin{proof}[Proof Sketch]
	\localcoqed
	The goal is to show $\Beh_{p'}(\tau, b) \Rightarrow \Beh_p(\tau, b) \vee \exists t \prefix \tau, \Beh_p(t, \behErr)$.
	Instead we show a partial version of it: $p' \BSPartial t\ \Rightarrow p \BSPartial t \vee\ \exists t' \prefixStrict t, \Beh_p(t', \behErr)$.
	Using determinacy of $p'$ and compactness, this can be lifted to full behaviors, yielding the theorem.

	We prove the partial version via an induction on $|t|$. The base case $p \BSPartial \emptyTrace$ is given by property (1).
	So now let $p' \BSPartial {t \traceConcat e}$ for some event $|e|=1$.
	By property (2) and induction, either $p \BSPartial t$ or $\Beh_p(t',\behErr)$ for some $t' \prefixStrict t$.
	In the second case we are done.
	Else it is $p \BSPartial t$.
	Classically, either $\nexists \tilde e$ with $|\tilde e| = 1$ and $p \BSPartial {t \traceConcat {\tilde e}}$, in which case we get the desired $\Beh_p(t, \behErr)$, or otherwise we have $p \BSPartial {t \traceConcat {\tilde e}}$ and, using \cref{lemma-make-forward-preservation}, $p' \BSPartial {t \traceConcat {\tilde e}}$.
	Determinacy of $p'$ then yields $e \asymp \tilde e$ and by receptiveness of $p$ we conclude $p \BSPartial {t \traceConcat e}$, as required.
\end{proof}

\begin{theorem}
	\label{theorem-fw-to-equiv}
	Let $p$ be determinate. Then, $({p \fwpreserv p'}) \wedge {(p' \fwpreserv p)} \Rightarrow {p \equiv p'}$. \coqed
\end{theorem}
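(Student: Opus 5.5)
The plan is to reduce everything to the three underlying predicates, using \cref{lemma-make-forward-preservation} in the direction ``$\Rightarrow$''. Applying it to $p \fwpreserv p'$ and to $p' \fwpreserv p$ gives, for all $t$, $\tau$, $\stateAbstr$, the three equivalences
\[ p \BSEval t \stateAbstr \Leftrightarrow p' \BSEval t \stateAbstr, \qquad p \BSEvalinf \tau \Leftrightarrow p' \BSEvalinf \tau, \qquad p \BSPartial t \Leftrightarrow p' \BSPartial t. \]
Since $\Beh_p$ is defined purely in terms of these three predicates, equality of the behavior sets then follows clause by clause.

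Before relying on the lemma, I would check its ``$\Rightarrow$'' direction directly, since it carries the whole argument.
\begin{itemize}
\item For termination, suppose $\Beh_p(t, \behOk \stateAbstr)$. Forward preservation gives some $\Beh_{p'}(\tau',b') \refines (t,\behOk \stateAbstr)$. As $b=\behOk \stateAbstr \neq \behErr$, refinement forces $\tau'=t$ and $b'=b$.
\item Divergence is handled in exactly the same way.
\item For partial executions, suppose $p \BSPartial t$. \Cref{lemma-partial-has-behavior} yields a behavior $\Beh_p(t\traceConcat\tau,b)$, and forward preservation yields $\Beh_{p'}(\tau',b')$ refining it. In both cases of refinement, $t \prefix t\traceConcat\tau \prefix \tau'$, because an $\behErr$ behavior only ever refines to an extension of its trace. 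Every behavior of $p'$ is backed by one of the three predicates, so property~(2) gives $p' \BSPartial t$.
\end{itemize}

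Next I would do the clause-by-clause comparison of $\Beh_p$ and $\Beh_{p'}$.
\begin{itemize}
\item The $\behOk$ and $\behDiv$ clauses are literally the first two equivalences.
\item The $\behErr$ clause is a boolean combination of $p \BSPartial t$, $\exists\stateAbstr.\ p\BSEval t\stateAbstr$, $p \BSEvalinf t$ and $\exists t'\neq\emptyTrace.\ p\BSPartial{t\traceConcat t'}$. Each of these is equivalent to its $p'$ counterpart, the last two by instantiating the equivalences at the appropriate traces. Hence $\Beh_p(t,\behErr)\Leftrightarrow\Beh_{p'}(t,\behErr)$.
\end{itemize}

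I expect the main obstacle to be purely bookkeeping: making sure the quantifiers in the $\behErr$ clause are transported correctly, and that the refinement case analysis in the partial-execution step is right. In particular, the $\behErr$ case needs the prefix argument $t \prefix \tau'$ rather than trace equality.

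The determinacy hypothesis on $p$ does not seem to be needed along this route. If it turns out to be essential, for instance because the ``$\Rightarrow$'' direction of \cref{lemma-make-forward-preservation} is more delicate than sketched above, I would use it as follows. Suppose a behavior of $p'$ maps via $p' \fwpreserv p$ to an $\behErr$ behavior $\Beh_p(t,\behErr)$ with $t$ a proper prefix of its trace. Mapping back via $p \fwpreserv p'$ produces behaviors of $p$ related by trace extension, and determinacy~(2) of $p$ collapses these, forcing equal traces and equal behaviors.
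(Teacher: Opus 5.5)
Your proof is correct, but it follows a genuinely different route from the paper. The paper never unfolds $\Beh$. Starting from $\Beh_p(\tau,b)$, it chains both preservations to get $\Beh_{p'}(\tau',b')$ and $\Beh_p(\tau'',b'')$ with $(\tau'',b'') \refines (\tau',b') \refines (\tau,b)$. It then uses the second clause of determinacy of $p$ to collapse the three pairs, since $\tau$ is a prefix of $\tau''$, and transfers determinacy to $p'$ via $p' \fwpreserv p$ to run the symmetric half.

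You instead push everything down to the generating predicates. You re-derive the ``$\Rightarrow$'' direction of \cref{lemma-make-forward-preservation} correctly; the only real case is $\BSPartial{}$, handled via \cref{lemma-partial-has-behavior} for $p$ and prefix-closure (2) for $p'$. Applied in both directions, it gives pointwise equivalence of $\BSEval{}$, $\BSEvalinf{}$ and $\BSPartial{}$. Since $\Beh$ is defined from these alone, the clause-by-clause comparison finishes the proof.

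Your suspicion that determinacy is superfluous is right. Even in the paper's chain the collapse is automatic in this framework. When $b \neq \behErr$, refinement already forces equality. When $b = \behErr$, the definition of $\Beh_p(t,\behErr)$ together with property (2) excludes every other behavior of $p$ whose trace extends $t$. So your hedging last paragraph can be dropped; as written, it also confuses which program the mapped-back behaviors belong to.

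The trade-off is this. Your route is shorter, proves the statement without the determinacy hypothesis, and avoids transferring determinacy to $p'$. But it relies on the particular construction of $\Beh$ from a prefix-closed partial predicate, including the classical compactness argument inside \cref{lemma-partial-has-behavior}. The paper's argument only manipulates $\fwpreserv$, $\refines$ and determinacy. Its shape therefore carries over to settings such as classical small-step behaviors, where going wrong is defined via stuck states rather than partial executions and determinacy is genuinely needed.
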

\begin{proof}[Proof Sketch]
	\localcoqed
	Using determinacy of $p$, we first show the ``half equivalence'' $\Beh_p(\tau,b) \Rightarrow \Beh_{p'}(\tau,b)$.
	$p' \fwpreserv p$ is enough to carry determinacy over from $p$ to $p'$, and therefore we can symmetrically show the other half equivalence and are done.
	
	Assume $\Beh_p(\tau,b)$.
	By $p \fwpreserv p'$ it is $\Beh_{p'}(\tau',b')$ for some $(\tau',b') \refines (\tau,b)$,
	and by $p' \fwpreserv p$ it is $\Beh_p(\tau'',b'')$ for some $(\tau'',b'') \refines (\tau',b') \refines (\tau,b)$.
	All these three tuples must be equal however due to determinacy of $p$, as $\tau$ is a prefix of $\tau''$ (because $\refines$ is transitive).
\end{proof}

\cref{theorem-fw-to-bw} is the behavioral analogue of Theorem 4.5 in \cite{compcert-tso}, which CompCert uses to transform a small-step forward simulation into a backward simulation.
Notably, it has the same premises: receptiveness of the source language and determinacy of the target language.
	\subsection{Small-Step and Big-Step Semantics}
\label{section-small-bigstep}

Our goal is it to bring small-step and big-step semantics into above behavioral framework.

For a small-step semantics $\SSStep{}$ we can use its small-step behavior to define the behavior predicates:
given a program $p$ with initial state $\stateAbstr_p$, define
$p \BSPartial t\ :\Leftrightarrow \exists \stateAbstr, \stateAbstr_p \SSStar t \stateAbstr$
and
${p \BSEval t \stateAbstr} :\Leftrightarrow \stateAbstr_p \SSStar t \stateAbstr_{fin}$ (with $\stateAbstr_{fin}$ final state)
and
$p \BSEvalinf \tau\ :\Leftrightarrow \stateAbstr_p \SSForever \tau$,
where $\SSForever{\tau}$ is a coinductive predicate of divergence as defined in \cref{section-coinductive}.

\begin{lemma}
When $\SSStep{}$ is determinate, above definition yields a determinate behavioral semantics whose behavior coincides with the small-step behavior of $\SSStep{}$.
When $\SSStep{}$ is additionally receptive, the behavioral semantics is also receptive. \coqed
\end{lemma}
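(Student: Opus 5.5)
We first check that the three defined predicates satisfy properties (1)--(3) of a behavioral semantics, then establish agreement of the behaviors, and finally transfer determinacy and receptiveness. Property (1) is immediate from the reflexive case $\stateAbstr_p \SSStar{\emptyTrace} \stateAbstr_p$. For property (2) we proceed case by case. A prefix of a finite run $\stateAbstr_p \SSStar{t} \stateAbstr$ is obtained by induction on the star derivation, using that each step emits at most one event (the footnoted requirement); hence every finite prefix is the trace of some initial segment of the run. For $\stateAbstr_p \SSForever{\tau}$ we unfold the coinductive divergence predicate (as defined later in \cref{section-coinductive}) finitely many times until the emitted trace covers the given finite prefix $t \prefix \tau$. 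This requires that divergence with trace $\tau$ produces every finite prefix of $\tau$ after finitely many steps; for infinite $\tau$ this is the usual productivity, and for finite $\tau$ it is trivial.

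Compactness (3) is the main obstacle. Given infinite $\omega$ with $\stateAbstr_p \SSStar{t} \cdot$ for all $t \prefixStrict \omega$, we must build a single infinite execution producing $\omega$. Without determinacy this fails in general (König-style failure with infinite branching), which is why the hypothesis is needed. The plan uses determinacy of $\SSStep{}$ to show that any two runs from $\stateAbstr_p$ producing traces $t_1 \prefix t_2$ are ``aligned'': the run producing $t_1$ can be extended along the one producing $t_2$. Concretely, we prove by induction that if $s \SSStar{t} s_1$ and $s \SSStar{t\cdot t'} s_2$, then either $s_1$ reaches $s_2$ emitting $t'$, or $s_2$ reaches $s_1$ silently with $t'=\emptyTrace$. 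The proof uses the fact that steps with equal traces lead to equal states, together with $t_1 \asymp t_2$ to exclude a silent step competing with an event step. With this, we define coinductively, by choice (classically), a sequence of states $s_0=\stateAbstr_p, s_1, \dots$ where $s_{n}$ is reached with trace $\omega_{\le n}$ and $s_{n+1}$ is reachable from $s_n$ with exactly the next event. Each segment has at least one step, so the concatenation yields $\stateAbstr_p \SSForever{\omega}$ by a cofixpoint.

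For agreement of behaviors, termination and divergence coincide by definition. For going wrong, we show that $\Beh_p(t,\behErr)$ in the behavioral sense matches small-step ``reaches a stuck non-final state with trace $t$''. In the direction from small-step, a stuck state $\stateAbstr$ with $\stateAbstr_p\SSStar t \stateAbstr$ gives $p\BSPartial t$, and determinacy plus the alignment lemma show that no other run with trace $t$ terminates, diverges, or extends $t$; every run with trace $t$ can only silently reach $\stateAbstr$. In the converse direction, we take a run with trace $t$ and extend it maximally with silent steps. Because no extension emits events and the run does not diverge silently ($p\BSEvalinf t$ fails), it reaches a state with no silent step; that state has no event step (otherwise $p\BSPartial{t\cdot e}$) and is not final, so it is stuck.

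Determinacy property (a) follows from the alignment lemma: the runs for $t\cdot e_1$ and $t\cdot e_2$ share a common state after $t$, from which the first event steps must match by small-step determinacy. Property (b) is obtained by the same alignment: behaviors of a determinate small-step semantics are unique, and any $\Beh_p(t\cdot\tau,b_2)$ would extend a run past a final, stuck, or silently diverging state, which is impossible. Receptiveness follows by taking the run to $t\cdot e_1$, locating the step emitting $e_1$, applying small-step receptiveness to obtain a step emitting $e_2$, and using the prefix of the run before that step.
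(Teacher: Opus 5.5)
Most of your argument goes through. The alignment lemma is correct under determinacy: $\asymp$ forces $|a|=|b|$, and two event steps must emit the first event of the common trace. It yields compactness, the going-wrong correspondence, and determinacy (a)/(b), and your receptiveness argument is exactly right.

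The genuine gap is the claim that termination and \emph{divergence} coincide by definition. The small-step behavior the lemma refers to is CompCert's. There, divergence is recorded either as silent divergence after a finite trace, $\foreverfinite{t}\,\stateAbstr_p$, or as reactive divergence, $\reacts{\omega}\,\stateAbstr_p$ (see \cref{section-coinductive}). By contrast, $p \BSEvalinf{\tau}$ is defined through the new guarded predicate $\stateAbstr_p \SSForever{\tau}$. Relating the two is precisely \cref{thm-foreverx-equiv}:
\begin{itemize}
\item its forward direction shows that every behavioral divergence is a small-step divergence;
\item its converse, which the paper only claims under determinacy, is needed to show that every $\foreverfinite{t}$ or $\reacts{\omega}$ behavior is a behavioral one.
\end{itemize}
The paper singles out this divergence case as the non-trivial part of the lemma. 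You even use pieces of it implicitly. In the going-wrong converse you turn a finite run followed by an infinite silent extension into $p \BSEvalinf{t}$. In the other going-wrong direction you extract from $\stateAbstr_p \SSForever{t}$ a finite run to a silently diverging state. Both must be proved, or cited from \cref{thm-foreverx-equiv}, rather than taken as definitional.

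Smaller points:
\begin{itemize}
\item For property (2), the finite-$\tau$ case is not trivial. It is the guard that forces a finite $\tau$ to actually be emitted; without it, a silent self-loop would diverge with any trace. It needs the same induction on the counter as the infinite case.
\item In compactness, ``each segment has at least one step'' only gives productivity of the cofixpoint. You must also supply the counters required by the guard, e.g.\ the number of silent steps remaining before the event step of the current segment.
\item Several of your exclusions need that final states have no successor. This covers a final state reached with the same trace as a stuck or silently diverging state, and the case $b_1 = \behOk\,\stateAbstr$ with $\tau \neq \emptyTrace$ in (b). The property is part of CompCert's determinacy but not of the definition displayed in the paper, so state it explicitly.
\end{itemize}
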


Determinacy is required because otherwise, partial execution alone is not strong enough to detect going wrong: there could be two executions 
	$\stateAbstr_p \SSStar t \stateAbstr_1 \not \SSStep{}$ and 
	$\stateAbstr_p \SSStar {t \traceConcat e} \stateAbstr_2$,
	in which case $\Beh_p(t, \behErr)$ would incorrectly fail to hold.
That the behavior coincides with the small-step behavior as defined by CompCert in case of divergence is non-trivial -- see \cref{thm-foreverx-equiv}. ~\\

A big-step semantics as described by Leroy \cite{coind-bs-leroy} is defined by an inductive terminaton judgment and a coinductive divergence judgment.
Two adaptations are required to fit it into the behavioral framework:
introducing a partial judgment and changing the divergence judgment to allow a general trace.
Both of these modifications are presented in \cref{section-coinductive}.
A main benefit of behavioral big-step semantics is that we do not need an additional, complicated going wrong judgment but can derive it from the others which are generally simpler.
However, showing the required properties of the partial predicate is nontrivial for a big-step semantics.
We can solve this be resorting to an equivalent small-step semantics: if we can show soundness and completeness w.~r.~t.~a behavioral small-step semantics, the respective partial properties of the big-step semantics follow directly.

If $p$ is a small-step semantics and $p^{BS}$ a big-step semantics, both interpreted as behavioral semantics, soundness of $p^{BS}$ w.~r.~t.~$p$ is exactly the conjunction of the three statements
$(p^{BS} \BSEval t \stateAbstr \Rightarrow p \BSEval t \stateAbstr)$ and $(p^{BS} \BSEvalinf \tau \Rightarrow p \BSEvalinf \tau)$ and $(p^{BS} \BSPartial t \Rightarrow p \BSPartial t)$.
By \cref{lemma-make-forward-preservation}, this yields $p^{BS} \fwpreserv p$.
Similarly, completeness gives $p \fwpreserv p^{BS}$.
If $p$ is determinate, then \cref{theorem-fw-to-equiv} concludes the equivalence $p \equiv p^{BS}$.

\begin{lemma}
	\label{lemma-det-rec-preserved}
	Determinacy and receptiveness are preserved under equivalence. \coqed
\end{lemma}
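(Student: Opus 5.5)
The plan is to show that, under $p \equiv p'$, the two partial-evaluation predicates coincide: $p \BSPartial t \Leftrightarrow p' \BSPartial t$ for every finite $t$. Once we have this, the statement is immediate. Determinacy (a) and receptiveness are phrased purely in terms of $\BSPartial{}$, so they transfer verbatim from $p$ to $p'$. Determinacy (b) is phrased purely in terms of $\Beh$, which is identical for $p$ and $p'$ by the definition of equivalence. By symmetry of $\equiv$, it suffices to prove one direction of each transfer.

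The key step is a characterization of partial evaluation through behaviors alone:
\[ p \BSPartial t \iff \exists\, \tau, b,\ \Beh_p(\tau, b) \wedge t \prefix \tau. \]
For ``$\Rightarrow$'' I invoke \cref{lemma-partial-has-behavior}: from $p \BSPartial t$ it gives $\Beh_p(t \traceConcat \tau', b)$ for some $\tau'$ and $b$, and $t$ is a prefix of $t \traceConcat \tau'$.

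For ``$\Leftarrow$'' I split on $b$:
\begin{itemize}
\item If $b = \behOk \stateAbstr$, then $\Beh_p(\tau, b)$ means $p \BSEval \tau \stateAbstr$.
\item If $b = \behDiv$, then it means $p \BSEvalinf \tau$.
\item If $b = \behErr$, the definition of $\Beh_p(\tau,\behErr)$ contains $p \BSPartial \tau$ as a conjunct.
\end{itemize}
In each case property (2) of behavioral semantics yields $p \BSPartial t$ for every finite prefix $t \prefix \tau$.

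Since the right-hand side of the characterization mentions only $\Beh_p$, and $\Beh_p = \Beh_{p'}$ by $p \equiv p'$, the same characterization for $p'$ gives $p \BSPartial t \Leftrightarrow p' \BSPartial t$. The transfer of the three properties then follows as described above, with no further case analysis. The only mildly delicate point, and the one I would check most carefully, is the $\behErr$ case of ``$\Leftarrow$'': it needs the fact that the erroneous trace is finite and is itself a partial evaluation, which the definition of $\Beh_p(t,\behErr)$ supplies directly. Nothing else requires classical reasoning beyond what \cref{lemma-partial-has-behavior} already uses.
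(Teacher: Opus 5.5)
Your proof is correct. The paper gives no written proof and only defers to the Rocq development, but your argument is the one its framework sets up. Equivalence yields $p \fwpreserv p'$ and $p' \fwpreserv p$. The $\Rightarrow$ direction of \cref{lemma-make-forward-preservation}, whose proof is exactly your combination of \cref{lemma-partial-has-behavior} with property (2), then makes $p \BSPartial{t}$ and $p' \BSPartial{t}$ coincide. After that, determinacy (a) and receptiveness transfer verbatim, and determinacy (b) transfers directly through $\Beh$.
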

	\subsection{Pipeline Integration}
\label{section-pipeline-integration}

We describe the modifications we made to CompCert in order to integrate behavioral semantics, noting that the approach also applies to other small-step transformation pipelines beyond CompCert.
\cref{fig-compcert-pipeline} shows a typical compiler pipeline at the top, starting from the C source program $p_c$ and ending in the assembly program $p_{asm}$.
Each transformation establishes a small-step simulation \cite{compcert-3}, which is then lifted to the entire pipeline.
From the small-step simulation between $p_c$ and $p_{asm}$ one obtains forward preservation $p_c \fwpreserv p_{asm}$ and, using additional properties of the semantics, ultimately backward preservation $p_c \bwpreserv p_{asm}$, the main theorem of a verified compiler.
Note that $p_c$ and $p_{asm}$ are classical small-step semantics and not behavioral semantics — the notations $\fwpreserv$ and $\bwpreserv$ are therefore used here in the classical sense to refer to small-step behaviors.

\newcommand{\pcmin}{p_{cmin}}
\newcommand{\pcminl}{{\tilde p}_{cmin}}
\newcommand{\pcminBS}{p^{BS}_{cmin}}
\newcommand{\pcminlBS}{{\tilde p}^{BS}_{cmin}}

\begin{figure}
	$$
		p_c \rightarrow p_{clight} \rightarrow \cdots \rightarrow p_{cmin} \rightarrow \cdots \rightarrow p_{asm}
	$$
	\vspace{-4mm}

	\begin{centeredmath}
		$$
		\begin{array}{ccccc}
			&\multicolumn{3}{c}{
				\overbrace{\phantom{\quad \pcminBS \longrightarrow \dots \longrightarrow \pcminlBS}}
				^{{\pcminBS \bwpreserv \pcminlBS}}
			} & \\[-1.3em]
			
			\phantom{heydutest}&\coord{start}\pcminBS & \underset{
				\mathclap{\smash[b]{\substack{\text{bigstep loop}\\\text{transformations}}}}
			}{\longrightarrow \dots \longrightarrow} & \pcminlBS &\\
			
			& {\Big\uparrow \mathrlap{\equiv}} & & {\mathllap{\equiv} \Big\downarrow} &\\
			
			\multicolumn{2}{r}{
				\underbrace{
					p_c \overset{
						\smash[t]{\substack{\text{smallstep}\\\text{simulation}\vspace{1mm}}}
					}{\rightarrow \dots \rightarrow} \pcmin
				}_{p_c \bwpreserv \pcmin}
			} & & \multicolumn{2}{l}{
				\underbrace{
					\pcminl \coord{end} \overset{
						\smash[t]{\substack{\text{smallstep}\\\text{simulation}\vspace{1mm}}}
					}{\rightarrow \dots \rightarrow} \tilde p_{asm}
				}_{\pcminl \bwpreserv \tilde p_{asm}}
			}
		\end{array}
		$$
	\end{centeredmath}
	\begin{tikzpicture}[remember picture, overlay]
		\draw[red, dashed, rounded corners]
		([yshift=6.4ex, xshift=-0.5ex]start)
		rectangle
		([yshift=-2ex, xshift=0.5ex]end);
	\end{tikzpicture}
	
	\captionsetup{width=.85\linewidth}
	\caption{
		At the top, the classical CompCert pipeline: $p_c$ is compiled into $p_{asm}$.
		At the bottom, the new pipeline that performs loop transformations on \Cminor.
		The part inside the dashed rectangle uses the behavioral framework.
	}
	\label{fig-compcert-pipeline}
\end{figure}
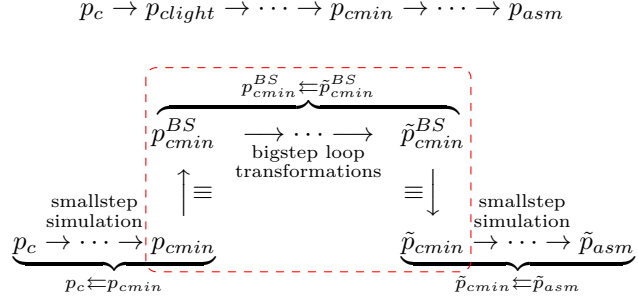

\cref{fig-compcert-pipeline} shows our pipeline at the bottom.
We use the (determinate and receptive) intermediate language \Cminor, whose small-step and big-step semantics we interpret as behavioral semantics.
At \Cminor, we split CompCert's pipeline into two parts and move to the big-step semantics in order to carry out various loop transformations ${\pcminBS \rightarrow \dots \rightarrow \pcminlBS}$.
Afterwards, we switch back to the small-step semantics.
The two parts of the small-step pipeline each yield a small-step simulation, from which a backward preservation follows.
By \cref{theorem-fw-to-bw}, a backward preservation follows for the loop transformations between $\pcminBS$ and $\pcminlBS$.
The equivalence between small-step and big-step \Cminor follows from soundness and completeness together with \cref{theorem-fw-to-equiv}.
This yields the chain
\begin{centeredmath}
	\begin{gather*}
		p_c \bwpreserv \pcmin \equiv \pcminBS \bwpreserv \pcminlBS \equiv \pcminl \bwpreserv \tilde p_{asm}, 
		\text{ and therefore: } p_c \bwpreserv \tilde p_{asm}.
	\end{gather*}
\end{centeredmath}

This restores CompCert's main theorem.
Here again, for small-step semantics, $\bwpreserv$ refers to small-step behaviors.
We were also able to establish all other main theorems of CompCert for our extended compiler chain. These include results on separate compilation via linking \cite{cc-linking-1, cc-linking-2} as well as statements about the behavior with respect to the alternative \texttt{Cstrategy} semantics that is implemented in CompCert. Further details can be found in our development.
	\section{Coinductive Divergence with General Trace}
\label{section-coinductive}

\begin{figure}
	\centering
	
	\rulefigdouble{0.17}
	{
		s \SSStep t s' \hspace{3mm} \forever\omega s'
	}{
		\forever{t \cdot \omega} s
	}
	\rulefigdouble{0.24}
	{
		s \SSStep \emptyTrace s' \hspace{3mm} \foreversilent s'
	}{
		\foreversilent s
	}
	\rulefig{0.25}
	{
		s \SSStar t s' \hspace{3mm} \foreversilent s'
	}{
		\foreverfinite t s
	}
	\rulefigdouble{0.28}
	{
		s \SSStar t s' \hspace{3mm} t \neq \emptyTrace \hspace{3mm} \reacts\omega s'
	}{
		\reacts{t \cdot \omega} s
	}
	
	\captionsetup{width=.99\linewidth}
	\caption{Various definitions of small-step divergence as found in CompCert.
	Here and in the following, we use single lines for inductive judgments and double lines for coinductive judgments.}
	\label{fig-forever}
\end{figure}

Given a small-step semantics $\SSStep{}$, \cref{fig-forever} presents on the left the classical coinductive definition of divergence ($\forever{\omega}$) producing an infinite trace $\omega$.
The problem of this definition is that it cannot determine whether $\omega$ is generated coinductively in its entirety or whether only a finite prefix $t \prefixStrict \omega$ is ever produced.
For instance, an empty \texttt{while true}-loop would diverge with any infinite $\omega$.
This issue can be addressed by resorting to multiple different coinductive definitions for silent ($\foreversilent$), finite ($\foreverfinite t$) and infinite ($\reacts \omega$) divergence which are also shown in \cref{fig-forever}.
$\forever \omega$ cannot decide between finite and infinite divergence; however, it holds that:
$$
\forever \omega s \Rightarrow \reacts \omega s \vee \exists t \prefixStrict \omega, \foreverfinite t s. \eqno\hbox{$(*)$} 
$$

CompCert follows this approach and uses both $\foreverfinite{t}$ and $\reacts{\omega}$ to define the divergence behavior of a small-step semantics.
However, this method cannot be transferred to a coinductive big-step semantics: first, all its divergence rules would have to be duplicated, and second, it is unclear how to transfer the $\SSStar t$ from $\reacts{t \traceConcat \omega}$ with its $t \neq \emptyTrace$ condition into big-step rules.

We therefore combine these definitions into \emph{a single} one that produces a general%
\footnote{
Our general trace is simply the sum type of finite and infinite traces. 
This works because we can distinguish classically between whether the trace is finite or infinite.}
(i.~e., finite or infinite) trace $\tau$ and that ensures that this trace is indeed produced completely.

Our judgment $\SSForeverx \tau n$, whose definition is shown in \cref{fig-foreverx}, additionally carries a natural number $n \in \mathbb{N}_0$ to ensure that after at most $n$ steps a non-empty prefix of $\tau$ has been produced.
This continues until $\tau$ is exhausted in the case where $\tau$ is finite, or indefinitely otherwise.
For this purpose, we define $guard\ \tau\ n\ t\ m := {(\tau = \emptyTrace) \vee (t = \emptyTrace \Rightarrow m < n)}$.
The concrete value of $n$ has no semantic relevance; we therefore occasionally write only $\SSForever \tau$.~\\

\begin{figure}
	\centering
	
	\rulefigdouble{0.6}
	{
		s \SSStep t s' \hspace{4mm} s' \SSForeverx \tau m \hspace{4mm} guard\ \tau\ n\ t\ m
	}{
		s \SSForeverx{t \cdot \tau} n
	}
	\captionsetup{width=.8\linewidth}
	\caption{Our definition of small-step divergence with a general trace.
		Here, $m, n \in \mathbb{N}_0$ and
		$guard\ \tau\ n\ t\ m := {(\tau = \emptyTrace) \vee (t = \emptyTrace \Rightarrow m < n)}$.
	}
	\label{fig-foreverx}
\end{figure}
\begin{theorem}
	\label[theorem]{thm-foreverx-equiv}
	\vbox{ 
	$
	s \SSForever{\tau} \Rightarrow
	\begin{cases}
		\foreverfinite \tau s & \text{for $\tau$ finite,} \\
		\reacts{\tau} s & \text{for $\tau$ infinite.}
	\end{cases} \\
	$

	If $\SSStep{}$ is determinate\footnotemark{}, equivalence holds. \coqed
	}
	\footnotetext{
		Given $\SSStar{t}$ with $t \neq \emptyTrace$, determinacy is needed to \emph{uniquely} identify the step count at which the first event occurs, which is required for a coinductive definition of $\SSForever{\tau}$.
	}
\end{theorem}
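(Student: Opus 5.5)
For the forward direction I would first prove an auxiliary lemma about the counter: if $s \SSForeverx{\tau}{n}$ and $\tau \neq \emptyTrace$, then there are a non-empty $t$, a state $s'$ and some $m$ with $s \SSStar{t} s'$, $\tau = t \traceConcat \tau'$ and $s' \SSForeverx{\tau'}{m}$. The proof is by well-founded induction on $n$. Invert the rule to obtain $s \SSStep{t_0} s_1$ and $s_1 \SSForeverx{\tau_1}{m}$ with $guard\ \tau_1\ n\ t_0\ m$, and split on $t_0$:
\begin{itemize}
\item If $t_0 \neq \emptyTrace$, we are done after one step.
\item If $t_0 = \emptyTrace$, then $\tau_1 = \tau \neq \emptyTrace$. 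The guard therefore forces $m < n$, and the induction hypothesis at $s_1$ gives the claim, with the silent step prepended to $\SSStar{}$.
\end{itemize}

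With this lemma, the two cases of the statement go as follows.

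\emph{Finite $\tau$.} Induct on the length of $\tau$. While $\tau \neq \emptyTrace$, repeatedly apply the auxiliary lemma to peel off non-empty prefixes, concatenating the $\SSStar{}$ pieces. Once the remaining trace is $\emptyTrace$, it suffices to prove $s \SSForeverx{\emptyTrace}{n} \Rightarrow \foreversilent s$. This holds by coinduction: inverting the rule gives a step $s \SSStep{t} s'$ with $t \traceConcat \tau' = \emptyTrace$, hence $t = \tau' = \emptyTrace$, and $s' \SSForeverx{\emptyTrace}{m}$. Finally, rule $\foreverfinite{t}$ assembles the pieces.

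\emph{Infinite $\tau$.} Proceed by coinduction on $\reacts{}$, with the invariant $\exists n,\ s \SSForeverx{\tau}{n}$ and $\tau$ infinite. The auxiliary lemma yields $s \SSStar{t} s'$ with $t \neq \emptyTrace$ and $\tau = t \traceConcat \tau'$, where $\tau'$ is still infinite. This is exactly the premise shape of the $\reacts{}$ rule. The coinductive call is guarded because a constructor of $\reacts{}$ is produced before it.

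For the converse under determinacy, I would show that $\foreverfinite{\tau} s$ implies $s \SSForeverx{\tau}{n}$ for some $n$ (finite case), and likewise $\reacts{\tau} s$ (infinite case). The difficulty is that $\SSForeverx{}{}$ is coinductive, yet at each step we must commit to a specific counter $n$. The natural choice is $n$ equal to the number of steps remaining until the next event. I would therefore define the invariant relation
\[ R(s, \tau, n) :\Leftrightarrow \text{($\tau = \emptyTrace$ and $\foreversilent s$), or $s \SSStar{t} s'$ in exactly $n{+}1$ steps with $|t| = 1$, $\tau = t \traceConcat \tau'$, and $s'$ diverges on $\tau'$ in the $\foreverfinite{}$ or $\reacts{}$ sense.} \]
Since the traces are single events, the non-empty prefix in $\reacts{}$ can first be normalized to a chain of one-event segments. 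This uses the footnoted assumption $|t| \le 1$ per step.

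The coinductive step for $R$ is where determinacy is used, and I expect it to be the main obstacle. Given $R(s, \tau, n)$ in the second case, take the first step $s \SSStep{t_0} s_1$ of the witnessing execution.
\begin{itemize}
\item If $t_0$ is the event, $s_1$ lies on the diverging execution of $\tau'$. We recompute a fresh counter for $\tau'$, which requires knowing that the next event again occurs after finitely many steps.
\item If $t_0 = \emptyTrace$, then $R(s_1, \tau, n-1)$ holds and the guard $m < n$ is satisfied.
\end{itemize}

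The subtle point is that the invariant must fix a \emph{specific} execution, so the counter is well-defined and decreases. Here determinacy guarantees that any decomposition of $\reacts{}$ or $\foreverfinite{}$ yields the same step count to the first event, so the counter is consistent across unfoldings. Without determinacy, another branch could produce the event later. I would prove a small lemma: under determinacy, if $s \SSStar{e} s_a$ and $s \SSStar{e'} s_b$ with $e \asymp e'$, then the numbers of silent steps before the event agree. Mixing $\foreversilent$ into the first case of $R$ is then routine.
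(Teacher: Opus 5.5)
Your proof is correct. The forward direction is the natural argument the $guard$ is designed for: well-founded induction on the counter to reach the next event, then induction on the length of $\tau$ plus a coinduction for $\foreversilent$ in the finite case, and a coinduction on $\reacts{}$ in the infinite case.

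For the converse you use the idea the paper's footnote alludes to: the counter is the number of silent steps before the next event. Your implementation differs from the paper's in a way worth noticing. The footnote says determinacy is needed to identify \emph{the} step count to the first event uniquely, so the paper's Rocq proof evidently fixes the counter as a quantity determined by the state. Your relation $R(s,\tau,n)$ instead carries a witnessing execution, and in the post-fixed-point step you choose $s \SSStep{t_0} s_1$ yourself, as the first step of that witness.
\begin{itemize}
\item If $t_0$ is silent, the same witness gives $R(s_1,\tau,n-1)$.
\item If $t_0$ is the event, you choose a fresh counter existentially from the $\foreverfinite{}$ or $\reacts{}$ derivation at the end of the witness. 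This only uses that silent steps can be prepended to $\foreversilent$ and to $\reacts{}$.
\end{itemize}
No second execution is ever compared, so the ``consistency across unfoldings'' you worry about never arises, and your lemma on matching step counts under determinacy is never invoked.

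Your argument therefore proves the converse for arbitrary, even non-determinate, semantics. The $|t| \le 1$ normalization is also inessential: taking the first non-silent step, with any non-empty trace, works equally well. You can drop the determinacy lemma. Your route yields a hypothesis-free converse, whereas the paper's route pins the counter down uniquely and therefore genuinely requires determinacy.
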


A coinductive big-step semantics, as originally defined by Cousot and Cousot \cite{coind-bs-cousot} and extended by Leroy \cite{coind-bs-leroy} has a coinductive judgment for divergence with an infinite trace.
This gives rise to the same problem as in small-step semantics:
the empty loop can diverge with any trace, $(\LOOP \SKIP) \BSEvalinf \omega$.
We can however employ the same trick as above.
The following figure shows, on the left, two coinductive definitions in a big-step semantics that characterize the divergence of a loop (the full semantics is given in \cref{section-cminor}):

{
\setlength{\textfloatsep}{0pt} 
\setlength{\floatsep}{0pt}     
\setlength{\intextsep}{0pt}

\begin{figure}[H]
\coord{fig}
\centering
\rulefigdouble{.45} {
		\BSState {s} {\stateAbstr} \BSEvalinf {\omega}
	}{
		\BSState {\LOOP {s}} {\stateAbstr} \BSEvalinf {\omega}
	}
\rulefigdouble{.5} {
		\BSState {s} {\stateAbstr} \BSEvalinfx {\tau} m
		\hspace{3mm}
		guard\ \tau\ n\ \emptyTrace\ m
	}{
		\BSState {\LOOP {s}} {\stateAbstr} \BSEvalinfx {\tau} n
	}
\rulefigdouble{.45} {
		\BSState {s} {\stateAbstr} \BSEval t {\stateAbstr'}
		\hspace{3mm}
		\BSState {\LOOP {s}} {\stateAbstr'} \BSEvalinf {\omega}
	}{
		\BSState {\LOOP {s}} {\stateAbstr} \BSEvalinf {t \traceConcat \omega}
	}
	\rulefigdouble{.5} {
		\BSState {s} {\stateAbstr} \BSEval t {\stateAbstr'}
		\hspace{3mm}
		\BSState {\LOOP {s}} {\stateAbstr'} \BSEvalinfx {\tau} m
		\hspace{3mm}
		guard\ \tau\ n\ t\ m
	}{
		\BSState {\LOOP {s}} {\stateAbstr} \BSEvalinfx {t \traceConcat \tau} n
	}
\end{figure}
\begin{tikzpicture}[remember picture, overlay]
    \node at ([yshift=-.5cm,xshift=.43\linewidth] fig) {$\xrightarrow{\phantom{abcde}}$};
\end{tikzpicture}
}
A loop can diverge either when its body diverges ($\BSEvalinf \omega$) or when its body terminates ($\BSEval t$) and the loop diverges afterwards.
Intepreted coinductively, this also includes indefinite execution of a terminating loop body.
We update each of these judgments, seen on the right, by adding the just introduced $guard$ condition.
If the rule application cannot itself produce a trace, we use $\emptyTrace$ inside the guard, otherwise we use the produced trace $t$.
This guarantees that after at most $n$ applications of the divergence rules, a nonempty prefix of $\tau$ has been produced, except when $\tau$ is empty, and therefore enforces the full production of the now general trace $\tau$.
As in the small-step case, the actual value of $n \in \mathbb{N}_0$ is irrelevant and therefore we will interchangeably use $\BSEvalinfx \tau n$ and $\BSEvalinf \tau$.

While we do not need to establish any relation between the new and the old big-step judgments, it is an easy coinduction to show the following:
$$
\forall \tau \prefix \omega,\ \BSState s \stateAbstr \BSEvalinfx{\tau} n \Rightarrow \BSState s \stateAbstr \BSEvalinf{\omega}.
$$
Proving the other direction (given $\omega, \exists \tau \prefix \omega$) is also possible for determinate semantics such as $\Cminor$.
The proof is not direct however but is done by going through new and old small-step semantics that are equivalent to the respective big-step semantics and applying above equation $(*)$ and \cref{thm-foreverx-equiv} to convert between them.
	\section{Cminor}
\label{section-cminor}

\Cminor is an early intermediate representation of CompCert which consists of expressions, statements and functions \cite{cminor}.
A function is not defined by a control flow graph, but instead by a single, inductively defined statement.
We present an excerpt of the modified \Cminor (omitting internal function calls, memory and other things for brevity) which we use in the remainder of this paper.
The following statements exist:
\begin{centeredmath}
	\begin{alignat*}{2}
		stmt :=\ &\SKIP &&\\
		|\ &\STORE {reg} {expr} &&\text{register operation} \\
		|\ &\IFTHENELSE {expr} {stmt} {stmt} \hspace{1cm} &&\text{if-then-else} \\
		|\ &\SEQ {stmt} {stmt} &&\text{sequence} \\
		|\ &\LOOP {stmt} &&\text{loop} \\
		|\ &\BLOCK {stmt} &&\text{wrap in block} \\
		|\ &\EXIT {nat} &&\text{exit $n+1$ enclosing blocks} \\
		|\ &\EXTCALL {function} && \text{call external function}
	\end{alignat*}
\end{centeredmath}

The $\EXTCALL$ statement is the only one that produces an event.
Blocks are required around loops to allow exiting out of them via the $\EXIT {\!}$ statement.
For example, the following statement describes a loop that executes $stmt$ 5 times:
\begin{centeredmath}
	\begin{alignat*}{1}
		\SEQ {\STORE i 0}
		\llbracket \LOOP \big(\
			& (i <^? 5) \\
			?\ &\SEQ {stmt} {(\STORE {i} {i + 1})} \\
			:\ &\EXIT 0\ \big) \rrbracket
	\end{alignat*}
\end{centeredmath}


Both small- and big-step semantics are presented in \cref{fig-cminor-sem}.
A small-step state looks like $\SSState {stmt} {cont} {\stateAbstr}$, where $cont$ is the continuation that happens after $stmt$ has been executed and $\stateAbstr$ contains the global environment and the register state.
A continuation is either $\Kstop$ (execution ends), $\Kseq {stmt} {cont}$ (first $stmt$ is executed and then $cont$) or $\Kblock {cont}$ ($cont$ is wrapped inside a block).%
\footnote{
	In CompCert \Cminor, the stack trace is also fully encoded in the continuation.
}


The big-step semantics consists of two judgments: an inductive judgment $\BSState {stmt} \stateAbstr \BSEval t out$ and a coinductive judgment $\BSState {stmt} \stateAbstr \BSEvalinfx \tau n$.
A statement executes with an \emph{outcome} that is either $\outOk {\stateAbstr'}$ (normal execution with final state $\stateAbstr'$), $\outExit n {\stateAbstr'}$ (exit $n+1$ enclosing blocks where $n \in \mathbb{N}_0$) or $\outPartial$.

We have modified the original big-step semantics in two ways: first, we added $guard$ statements to each divergence rule to allow divergence with a general trace, as described in \cref{section-coinductive}.
Second, we added two rules with a $\outPartial$ outcome, thereby combining both partial and full evaluation into a single judgment.

Notice that, to achieve a $\outPartial$ outcome, the full trace must have been produced before an evaluation can be ``truncated'' using the rule $s \BSEval \emptyTrace \outPartial$.
From the definition it follows that any full evaluation induces a partial evaluation with the same trace.
All the other required propreties for a behavioral semantics follow after soundness and completeness w.~r.~t.~the small-step semantics have been established, as described in \cref{section-small-bigstep}.

\begin{figure}
	\small

	\centering
  \begin{framed}
		\begin{subfigure}{\linewidth}
		  \centering
			\rulefig{.48} {
				\istrue (\eval \stateAbstr e)
			}{
				\SSState {\IFTHENELSE e {s_1} {s_2}} k \stateAbstr \SSStep \emptyTrace \SSState {s_1} k \stateAbstr
			}
			\rulefig{.48} {
				\isfalse (\eval \stateAbstr e)
			}{
				\SSState {\IFTHENELSE e {s_1} {s_2}} k \stateAbstr \SSStep \emptyTrace \SSState {s_2} k \stateAbstr
			}

			\rulefig{.48} {
				\eval \stateAbstr e = v
			}{
				\SSState {\STORE r e} k \stateAbstr \SSStep {\emptyTrace} \SSState \SKIP k {\stateAbstr[r := v]}
			}
			\rulefig{.48} {
				\extcall f \stateAbstr = \stateAbstr'
			}{
				\SSState {\EXTCALL f} k \stateAbstr \SSStep{\traceCALL f {$\stateAbstr$}} \SSState \SKIP k {\stateAbstr'}
			}

			\rulefignoline{.4} {
				\SSState {\LOOP s} k \stateAbstr \SSStep \emptyTrace \SSState s {\Kseq {(\LOOP s)} k} \stateAbstr
			}
			\rulefignoline{.4} {
				\SSState {\SEQ {s_1} {s_2}} k \stateAbstr \SSStep \emptyTrace \SSState {s_1} {\Kseq {s_2} k} \stateAbstr
			}

			\vspace{2mm}

			\rulefignoline{.4} {
				\SSState \SKIP {\Kseq s k} \stateAbstr \SSStep \emptyTrace \SSState s k \stateAbstr
			}
			\rulefignoline{.44} {
				\SSState \SKIP {\Kblock k} \stateAbstr \SSStep \emptyTrace \SSState \SKIP k \stateAbstr
			}

			\rulefignoline{.4} {
				\SSState {\BLOCK s} k \stateAbstr \SSStep \emptyTrace \SSState s {\Kblock k} \stateAbstr
			}
			\rulefignoline{.45} {
				\SSState {\EXIT n} {\Kseq s k} \stateAbstr \SSStep \emptyTrace \SSState {\EXIT n} k \stateAbstr
			}

			\rulefignoline{.43} {
				\SSState {\EXIT 0} {\Kblock k} \stateAbstr \SSStep \emptyTrace \SSState \SKIP k \stateAbstr
			}
			\rulefignoline{.52} {
				\SSState {\EXIT\ (n+1)} {\Kblock k} \stateAbstr \SSStep \emptyTrace \SSState {\EXIT n} k \stateAbstr
			}
		\caption*{Small-step judgment $\SSStep{t}$}
		\vspace{-2mm}
		\end{subfigure}
	\end{framed}

	\begin{framed}
		\vspace{-4mm}
		\centering
		\begin{subfigure}{\textwidth}
			\centering
			\rulefignoline{.25} {
				\forall s: s \BSEval \emptyTrace \outPartial
			}
			\rulefignoline{.2} {
				\BSState \SKIP \stateAbstr \BSEval\emptyTrace \outOk \stateAbstr
			}
			\rulefignoline{.3} {
				\BSState {\EXIT n} \stateAbstr \BSEval\emptyTrace \outExit n \stateAbstr
			}

			\rulefig{.4} {
				\istrue (\eval \stateAbstr e)
				\hspace{3mm}
				\BSState {s_1} \stateAbstr \BSEval t o
			}{
				\BSState {\IFTHENELSE e {s_1} {s_2}} \stateAbstr \BSEval t o
			}
			\rulefig{.4} {
				\isfalse (\eval \stateAbstr e)
				\hspace{3mm}
				\BSState {s_2} \stateAbstr \BSEval t o
			}{
				\BSState {\IFTHENELSE e {s_1} {s_2}} \stateAbstr \BSEval t o
			}

			\rulefig{.4} {
				\eval \stateAbstr e = v
			}{
				\BSState {\STORE r e} \stateAbstr \BSEval{\emptyTrace} \outOk \stateAbstr[r := v]
			}
			\rulefig{.5} {
				\extcall f \stateAbstr = \stateAbstr'
			}{
				\BSState {\EXTCALL f} \stateAbstr \BSEval{\traceCALL f {$\stateAbstr$}} \outOk \stateAbstr' \vee \outPartial^{(*)}
			}

			\rulefigannot{.4}{\LOOP_{exit}} {
				\BSState {s} {\stateAbstr} \BSEval {t} o
				\hspace{5mm}
				o \neq \outOk \_
			}{
				\BSState {\LOOP {s}} {\stateAbstr} \BSEval {t} o
			}
			\rulefigannot{.45}{\LOOP_{cont}} {
				\BSState {s} {\stateAbstr_1} \BSEval {t_1} \outOk {\stateAbstr_2}
				\hspace{5mm}
				\BSState {\LOOP s} {\stateAbstr_2} \BSEval {t_2} o
			}{
				\BSState {\LOOP {s}} {\stateAbstr_1} \BSEval {t_1 \traceConcat t_2} o
			}

			\rulefigannot{.4}{\kseq_{exit}} {
				\BSState {s_1} {\stateAbstr} \BSEval {t} o
				\hspace{5mm}
				o \neq \outOk \_
			}{
				\BSState {\SEQ {s_1} {s_2}} {\stateAbstr} \BSEval {t} o
			}
			\rulefigannot{.45}{\kseq_{cont}} {
				\BSState {s_1} {\stateAbstr_1} \BSEval {t_1} \outOk {\stateAbstr_2}
				\hspace{5mm}
				\BSState {s_2} {\stateAbstr_2} \BSEval {t_2} o
			}{
				\BSState {\SEQ {s_1} {s_2}} {\stateAbstr_1} \BSEval {t_1 \traceConcat t_2} o
			}

			\rulefigannot{.4}{\BLOCK{}_{exit}^1} {
				\BSState s \stateAbstr \BSEval t \outOk \stateAbstr' \vee \outExit 0 \stateAbstr'
			}{
				\BSState {\BLOCK s} \stateAbstr \BSEval t \outOk \stateAbstr'
			}
			\rulefigannot{.4}{\BLOCK{}_{exit}^2} {
				\BSState s \stateAbstr \BSEval t \outExit {(n+1)} \stateAbstr'
			}{
				\BSState {\BLOCK s} \stateAbstr \BSEval t \outExit n \stateAbstr'
			}

		\caption*{Inductive partial and full execution judgment $\BSEval{t}$ \hfill \scriptsize $(*)$ These are two separate rules}
		\vspace{-2mm} 
	\end{subfigure}
	\hbox to \linewidth {\dotfill}
	\vspace{1mm}
	\begin{subfigure}{\linewidth}
			
			\rulefigdouble{.48} {
				\istrue (\eval \stateAbstr e)
				\hspace{3mm}
				\BSState {s_1} \stateAbstr \BSEvalinfx \tau m
			}{
				\BSState {\IFTHENELSE e {s_1} {s_2}} \stateAbstr \BSEvalinfx \tau n
			}
			\rulefigdouble{.48} {
				\isfalse (\eval \stateAbstr e)
				\hspace{3mm}
				\BSState {s_2} \stateAbstr \BSEvalinfx \tau m
			}{
				\BSState {\IFTHENELSE e {s_1} {s_2}} \stateAbstr \BSEvalinfx \tau n
			}
			
			\rulefigdoubleannot{.3}{\BLOCK{}_{div}} {
				\BSState s \stateAbstr \BSEvalinfx \tau m
			}{
				\BSState {\BLOCK s} \stateAbstr \BSEvalinfx \tau n
			}
			\rulefigdoubleannot{.3}{\kseq_{div}} {
				\BSState {s_1} {\stateAbstr} \BSEvalinfx {\tau} m
			}{
				\BSState {\SEQ {s_1} {s_2}} {\stateAbstr} \BSEvalinfx {\tau} n
			}
			\rulefigdoubleannot{.3}{\LOOP_{div}} {
				\BSState {s} {\stateAbstr} \BSEvalinf {\tau} m
			}{
				\BSState {\LOOP {s}} {\stateAbstr} \BSEvalinf {\tau} n
			}
			\rulefigdoubleannot{.48}{\kseq'_{cont}} {
				\BSState {s_1} {\stateAbstr_1} \BSEval {t} \outOk {\stateAbstr_2}
				\hspace{5mm}
				\BSState {s_2} {\stateAbstr_2} \BSEvalinfx {\tau} m
			}{
				\BSState {\SEQ {s_1} {s_2}} {\stateAbstr_1} \BSEvalinfx {t \traceConcat \tau} n
			}
			\rulefigdoubleannot{.48}{\LOOP'_{cont}} {
				\BSState {s} {\stateAbstr_1} \BSEval {t} \outOk {\stateAbstr_2}
				\hspace{5mm}
				\BSState {\LOOP s} {\stateAbstr_2} \BSEvalinfx {\tau} m
			}{
				\BSState {\LOOP {s}} {\stateAbstr_1} \BSEvalinfx {t \traceConcat \tau} n
			}
		\caption*{Coinductive divergence judgment $\BSEvalinfx{\tau}{n}$. The $guard$ premise is omitted in each rule.}
		\vspace{-2mm}
	\end{subfigure}
	\end{framed}
	\caption{Small-step and big-step semantics of \Cminor (excerpt)}
	\label{fig-cminor-sem}
\end{figure}

	\subsection{Soundness and Completeness}

\icfponly{\paragraph*{Soundness}}

\iftoggle{lipics}{\textbf{Soundness}}{Soundness} of big-step \Cminor means that every terminating, partial or diverging big-step judgment yields a respective small-step execution.
This can be proven inductively resp.~coinductively and has already been done in CompCert for the classical \Cminor.
As it is not strong enough to distinguish between finite-trace and an infinite-trace divergence, the soundness statement was accordingly weak, only giving a $\forever \omega$ result.
Using our general big-step divergence, we were able to readily strengthen it into $\SSForever \tau$, which yields the correct small-step behavior by \cref{thm-foreverx-equiv}.
Also, we could almost trivially prove the partial case by integrating it into the terminating case.


\icfponly{\paragraph*{Completeness of Termination}}

\iftoggle{lipics}{\textbf{Completeness}}{Completeness} for termination means that, given any terminating small-step execution
$\SSState s k \stateAbstr \SSStar{t} \SSState \SKIP \Kstop {\stateAbstr_{fin}}$,
the statement $s$ must evaluate successfully, so there must be a ``terminal'' state of $s$ in between the original and the final state that corresponds to a big-step evaluation of $s$:
$$
\SSState s k \stateAbstr \SSStar{t_1} \SSState \SKIP k {\stateAbstr'} \SSStar{t_2} \SSState \SKIP \Kstop {\stateAbstr_{fin}}\ \text{ with } t = t_1 \traceConcat t_2 \text{ and } \BSState s \stateAbstr \BSEval{t_1} \outOk {\stateAbstr'}.%
$$

This can be proven by doing a reverse induction on the (finite) execution.
In a very similar fashion we can show that \emph{any} small-step execution $\SSState s k \stateAbstr \SSStar{t} \SSState {s'} {k'} {\stateAbstr'}$ either corresponds to an ongoing partial evaluation of $s$ ($s \BSEval t \outPartial$) or that a terminating state of $s$ exists within the execution together with a full big-step evaluation of $s$, just as above.
Applying both statements to the initial state gives us the desired completeness of $\BSEval{}$ and $\BSPartial{}$.

\icfponly{\paragraph*{Completeness of Divergence}}

If a \iftoggle{lipics}{\textbf{diverging}}{diverging} small-step execution visits the state $\SSState s k \stateAbstr$, then either $s$ itself causes the divergence or $s$ terminates and something inside $k$ diverges.
Therefore, we want to show:
\begin{centeredmath}
	\begin{gather*}
		\text{Given } \SSState s k \stateAbstr \SSForeverx \tau n,
		\text{ either }
		\left( \BSState s \stateAbstr \BSEvalinfx \tau n \right)
		\text{ or }
		\left( \exists (t\leq\tau), (out\neq\outPartial): \BSState s \stateAbstr \BSEval t out \right)
	\end{gather*}
\end{centeredmath}
This statement cannot yet be proven coinductively because of the $\vee$ and $\exists$ in its conclusion.
Negating the second option of the conclusion and bringing it to the left side as a premise gives the required form however.
Applying it to the initial state, together with soundness and determinacy, we were able to conclude completeness of $\BSEvalinf{}$.
As \Cminor is determinate and receptive, \cref{theorem-fw-to-equiv} gives behavioral equivalence and \cref{lemma-det-rec-preserved} yields determinacy and receptiveness for big-step $\Cminor$.%

	\section{Loop Transformations}
\label{section-loop-trafos}

\Cminor is a perfect%
\footnote{
	The big-step semantics naturally does not support goto statements, however.
	Our development therefore gives the user the option to skip the big-step stage if goto statements are present.
}
candidate for reasoning in big-step semantics due to its inductive nature and especially its inductive loop statement: transformations are straightforward to define and their big-step proofs are often quite natural.
We show and discuss here some of the loop tranformations that we have implemented.

\subsection{Loop Unswitching}

Loop unswitching is a transformation that hoists branches out of a loop when they are independent of the loop body.
We restrict our attention to the simple case in which the loop body immediately starts with the independent branch and is split into two parts by it.
We call such loops ``relevant''.

The transformation $f: stmt \to stmt$ is defined inductively over statement structure:
\begin{alignat*}{2}
	f (\LOOP (\IFTHENELSE{c}{s_1}{s_2})) &= \IFTHENELSE{c}{\LOOP (f s_1)}{\LOOP (f s_2)} \hspace{1em} &&\text{if } \indep c {s_1} \wedge \indep c {s_2} \\
	f (\LOOP s) &= \LOOP (f s) &&\text{for other loops} \\
	f (\IFTHENELSE{c}{s_1}{s_2}) &= \IFTHENELSE{c}{f s_1}{f s_2} \\
	f (\SEQ {s_1} {s_2}) &= \SEQ {f s_1} {f s_2} \\
	f (\BLOCK s) &= \BLOCK {f s} \\
	f s &= s &&\text{for all other $s$}.
\end{alignat*}
Thus, $f$ transforms all relevant loops, including nested ones, regardless of how deep they are inside the structure of the statement.
The predicate $\indep e s$ expresses that the statement $s$ does not write to any registers that are used by the expression $e$.

\begin{lemma}[indep\_spec]
	\label[lemma]{lemma-indep-spec}
	$\indep c s \wedge \BSState s \stateAbstr \BSEval t \outOk {\stateAbstr'} \Rightarrow \eval{c}{\stateAbstr} = \eval{c}{\stateAbstr'}$.
\end{lemma}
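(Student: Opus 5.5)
We prove a stronger, frame-style statement by induction on the derivation of the big-step judgment $\BSState s \stateAbstr \BSEval t o$. The strengthened claim is: if $\indep c s$ and $\BSState s \stateAbstr \BSEval t o$ with $o = \outOk{\stateAbstr'}$ or $o = \outExit n {\stateAbstr'}$, then $\stateAbstr$ and $\stateAbstr'$ agree on every register read by $c$.

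The lemma then follows from a separate, routine lemma that $\eval{c}{\cdot}$ depends only on the registers occurring in $c$ and on the global environment. We prove it by induction on $c$. The global environment (and memory, in the full development) is either untouched by every rule or is excluded from the expressions covered by \emph{indep}, so $\eval c \stateAbstr = \eval c {\stateAbstr'}$.

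The strengthening to $\outExit{}{}$ outcomes is necessary. A $\outOk{}$ result of $\BLOCK s$ may come from an $\outExit 0 {}$ result of $s$ via rule $\BLOCK{}_{exit}^1$, and a loop may finish through an exit. Before the induction we therefore fix $\indep c s$ as a syntactic predicate, "no register written by $s$ occurs in $c$", defined recursively. It is inherited by all immediate substatements: the branches of $\IFTHENELSE{e}{s_1}{s_2}$, both parts of $\SEQ{s_1}{s_2}$, and the bodies of $\LOOP s$ and $\BLOCK s$.

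The induction cases are as follows.
\begin{itemize}
\item $\SKIP$ and $\EXIT n$ leave the state unchanged.
\item For $\STORE r e$, independence gives $r \notin \mathrm{regs}(c)$, so $\stateAbstr[r := v]$ agrees with $\stateAbstr$ on $c$'s registers.
\item For $\EXTCALL f$, the external call may change memory or a result register. Here we rely on the definition of \emph{indep} either forbidding such statements or treating their written registers as writes.
\item For if-then-else and $\BLOCK{}$, we apply the induction hypothesis directly.
\item For $\kseq_{cont}$ and $\LOOP_{cont}$, we chain two applications of the induction hypothesis through the intermediate state $\stateAbstr_2$, using transitivity of "agree on $\mathrm{regs}(c)$". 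The hypothesis for the recursive $\LOOP s$ premise is available because the inductive derivation is well-founded.
\item For $\kseq_{exit}$ and $\LOOP_{exit}$, the claim follows from the hypothesis on $s_1$ (resp.\ $s$).
\item The $\outPartial$ outcome is excluded by the strengthened hypothesis. Note that in the $\kseq_{cont}$ and $\LOOP_{cont}$ cases the final outcome $o$ is not $\outPartial$ by assumption, while the first premise has outcome $\outOk{}$.
\end{itemize}

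The main obstacle I expect is matching the exact definition of \emph{indep} to the state components that $\eval{}{}$ actually reads. Expressions in real \Cminor may also load from memory, and external calls modify memory. If \emph{indep} only speaks about registers, the lemma needs \emph{indep} to also reject $c$ containing loads whenever $s$ may write memory, or $s$ containing calls at all. I would first inspect the definition, and strengthen it to "$c$ is load-free, or $s$ contains no stores or calls" if necessary. The rest is a mechanical rule induction.
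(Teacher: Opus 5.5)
Your proposal is correct and takes the paper's approach: the paper proves indep\_spec by induction on the big-step judgment $\BSEval{t}$. Your generalization of the claim to $\outExit{n}{\stateAbstr'}$ outcomes (forced by rule $[\BLOCK{}_{exit}^1]$), the vacuous treatment of $\outPartial$, and your caveat that the external-call and memory cases depend on how \emph{indep} is defined are exactly the bookkeeping that induction needs, since the paper never gives that definition.
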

This is the main lemma relating $indep$ and $eval$ and is proven via induction on $\BSEval t$.

To show the correctness of the transformation -- meaning, $f$ induces a forward preservation $p \fwpreserv f p$ between a program and its transformed version -- using big-step semantics, the two statements $\BSState{s}{\stateAbstr} \BSEval{t} out \Rightarrow \BSState{f s}{\stateAbstr} \BSEval{t} out$ and $\BSState{s}{\stateAbstr} \BSEvalinf{\tau} \Rightarrow \BSState{f s}{\stateAbstr} \BSEvalinf{\tau}$ suffice.
To show $\BSState{s}{\stateAbstr} \BSEval{t} out \Rightarrow \BSState{f s}{\stateAbstr} \BSEval{t} out$ we use structural induction on $\BSEval t$, which directly yields the result for every $s$ that is \emph{not} a special loop.

So now consider a special loop $\BSState {\LOOP(\IFTHENELSE{c}{s_1}{s_2})} \stateAbstr \BSEval t o$.
We need to show $\BSState {\IFTHENELSE{c}{\LOOP (f s_1)}{\LOOP (f s_2)}} \stateAbstr \BSEval t o$.
The assumption can arise in two cases: the loop is exited during the first iteration via the rule $[\LOOP_{exit}]$, or it is exited in a subsequent iteration via the rule $[\LOOP_{cont}]$.
The first case is treated in the following diagram:

\def\defaultHypSeparation{\hskip .1in}

\begin{table}[H]
	\centering
	\begin{tabular}{cc}
		\makecell[c]{
			$\left\uparrow \vbox to 13mm{} \right.$
			\hspace{-1em}
			\AxiomC{$\istrue (\eval c \stateAbstr)$}
			\AxiomC{$\BSState{f s_1}{\stateAbstr} \BSEval t o$}
			\RightLabel{\scriptsize wlog}
			\BinaryInfC{$\BSState {\overbrace{f (\IFTHENELSE{c}{s_1}{s_2})}^{\IFTHENELSE{c}{f s_1}{f s_2}}} {\stateAbstr} \BSEval t o$}
			\dottedLine
			\RightLabel{\scriptsize by ind}
			\UnaryInfC{$\BSState {\IFTHENELSE{c}{s_1}{s_2}} \stateAbstr \BSEval t o$}
			\AxiomC{$o \neq \outOk \_$}
			\insertBetweenHyps{\hskip -12pt}
			\LeftLabel{\scriptsize $[\LOOP_{exit}]$}
			\BinaryInfC{$\BSState {\LOOP \left( \IFTHENELSE{c}{s_1}{s_2} \right)} \stateAbstr \BSEval t o$}
			\DisplayProof
		}
		&
		\makecell[c]{
			\hspace{-2mm}
			\vspace{-1cm}
			$\left\downarrow \vbox to 8mm{} \right.$
			\hspace{-1em}
			\AxiomC{$\istrue (\eval c \stateAbstr)$}
			\AxiomC{$\BSState{f s_1}{\stateAbstr} \BSEval t o \neq \outOk \_$}
			\UnaryInfC{$\BSState {\LOOP \left( f s_1 \right)} \stateAbstr \BSEval t o$}
			\insertBetweenHyps{\hskip -6pt}
			\BinaryInfC{$\BSState {\IFTHENELSE{c}{\LOOP (f s_1)}{\LOOP (f s_2)}} \stateAbstr \BSEval t o$}
			\DisplayProof
		}
	\end{tabular}
\end{table}

The assumption is shown in the lower left.
From there, we proceed upward by unfolding the big-step judgment.
At the dashed line, the induction hypothesis is used to pull $f$ into the statement without changing the outcome.
For the execution of the resulting if-statement, we assume w.~l.~o.~g.~that the condition $c$ evaluates to true; the other case is analogous.
Using the so-obtained facts, the desired statement is then established on the right, read from top to bottom.

We have proven this case just by trivial unfolding of the judgment and using induction.
The second case is a little more interesting: the loop completes one full iteration and is then recursively evaluated.
It is treated in the following diagram:

\begin{table}[H]
	\small
	\centering
	$\left\uparrow \vbox to 12mm{} \right.$
	\hspace{-1em}
	\AxiomC{$\istrue (\eval c \stateAbstr)$}
	\AxiomC{$\BSState{f s_1}{\stateAbstr} \BSEval {t_1} \outOk {\stateAbstr'}$}
	\RightLabel{\scriptsize wlog}
	\BinaryInfC{$\BSState {\IFTHENELSE{c}{f s_1}{f s_2}} {\stateAbstr} \BSEval {t_1} \outOk {\stateAbstr'}$}
	\dottedLine
	\RightLabel{\scriptsize by ind}
	\UnaryInfC{$\BSState {\IFTHENELSE{c}{s_1}{s_2}} \stateAbstr \BSEval {t_1} \outOk {\stateAbstr'}$}
	\AxiomC{$\istrue (\overbrace{\eval c {\stateAbstr'}}^{=\ \eval c \stateAbstr})$}
	\AxiomC{$\BSState {\LOOP \left( f s_1 \right)} {\stateAbstr'} \BSEval {t_2} o$}
	\RightLabel{\scriptsize $(*)$}
	\BinaryInfC{$\BSState {\IFTHENELSE{c}{\LOOP \left( f s_1 \right)}{\LOOP \left( f s_2 \right)}} {\stateAbstr'} \BSEval{t_2} o$}
	\dottedLine
	\RightLabel{\scriptsize by ind}
	\UnaryInfC{$\BSState {\LOOP \left( \IFTHENELSE{c}{s_1}{s_2} \right)} {\stateAbstr'} \BSEval {t_2} o$}
	\LeftLabel{\scriptsize $[\LOOP_{cont}]$}
	\BinaryInfC{$\BSState {\LOOP \left( \IFTHENELSE{c}{s_1}{s_2} \right)} {\stateAbstr} \BSEval {t_1 \traceConcat t_2} o$}
	\DisplayProof
\end{table}

Using induction twice we get two statements about execution of the transformed premises.
\cref{lemma-indep-spec} yields $\eval c {\stateAbstr'} = \eval c \stateAbstr$.
Hence, only the $\istrue{\!\!}$ case may have occured at $(*)$ (which was assumed w.~l.~o.~g.~in the upper left), giving us crucial information about an additional execution of $\LOOP(f s_1)$, rather than possibly $\LOOP(f s_2)$.
The desired conclusion then follows immediately from the obtained facts about $\eval c \stateAbstr$ and the executions of $f s_1$ and $\LOOP(f s_1)$.

This proves preservation for the terminating case and, in fact, also establishes the partial case for free.
Showing the diverging case is very similar in spirit using coinduction on $\BSEvalinf{\tau}$, which gives the result directly for most statements, and looking at the two cases $[\LOOP_{div}]$ and $[\LOOP'_{cont}]$ to handle special loops.
We do not show it here but remark that, to prove it in Rocq, an additional, well-chosen coinductive hypothesis is needed to make Rocq's syntactic guard checker happy, an endeavor that can prove quite tricky in some cases.
	
\begin{figure}
	\centering
		\begin{tikzpicture}[scale=1.7]
			\node at (0,1) {$S_1$};
			\node at (1,1) {$S'_1$};
			\node at (0,0) {$S_2$};
			\node at (1,0) {$S'_2$};
			\draw [->, shorten >=8pt,shorten <=7pt] (0,1) -- (1,1);
			\draw [->, shorten >=8pt,shorten <=7pt, dashed] (0,0) -- (1,0);
			\draw [shorten >=8pt,shorten <=8pt] (0,1) -- (0,0);
			\draw [shorten >=8pt,shorten <=8pt, dashed] (1,1) -- (1,0);
			\node [] at (-0.15,0.47) {\small $\equiv$};
			\node [] at (1.15,0.47) {\small $\equiv$};
			\node [] at (0.5,0.1) {\scriptsize $t$};
			\node [] at (0.5,1.1) {\scriptsize $t$};
			\node [] at (0.85,0.05) {\tiny $*$};
		\end{tikzpicture}

	\small
	\begin{centeredmath}
		\begin{gather*}
			\cdots\coord{0x}
			\SSStep{} \BSState {\LOOP \left( \IFTHENELSE{c}{s_1}\coord{A1}{s_2} \right)} k \coord{a1} \coord{Ax1}
			\SSStep{} \BSState {\IFTHENELSE{c}{\coord{Bx1} s_1}{s_2}} {\coord{B1} \Kseq {\LOOP (\cdots\hskip -0.0em)} k}
			\underset{\mathclap{eval\ c}}{\SSStep{}} \BSState {s_1} {\Kseq {\LOOP \coord{Cx} (\cdots\hskip -0.0em)} k}
			\SSStep{}^* \coord{d1} \BSState {\SKIP} {\Kseq {\LOOP \coord{Dx} (\cdots\hskip -0.0em)} k} \\[2em]
			\cdots
			\SSStep{} \BSState {\IFTHENELSE{c}{\LOOP (f s_1)}{\coord{A2} \LOOP (f s_2 \coord{Ax2})}} {f k}
			\underset{\mathclap{eval\ c}}{\SSStep{}} \BSState {\coord{Bx2} \LOOP \left( f \coord{B2} s_1 \right)} {f k} \coord{b2}
			\SSStep{} \BSState {\coord{Cx2} f s_1} {\Kseq {\LOOP(f s_1)} {f k}}
			\SSStep{}^* \coord{d2} \BSState {\SKIP} {\Kseq {\LOOP (f s_1)} {f k}}
		\end{gather*}
	\end{centeredmath}
	\begin{tikzpicture}[overlay,remember picture]
		\draw[->,shorten >=3pt,shorten <=3pt, out=140,in=40,distance=7mm]
		([yshift=2.2ex, xshift=1ex]d1) to ([yshift=2.2ex, xshift=-1.5ex]a1);
		\draw[->,shorten >=3pt,shorten <=3pt, out=-150,in=-30,distance=5mm]
		([yshift=-1ex, xshift=1ex]d2) to ([yshift=-1ex, xshift=-1.5ex]b2);
		
		\coordinate(Y1) at ([yshift=-1ex]A1);
		\coordinate(Y2) at ([yshift=2.3ex]A2);
		\draw[=,double,double distance=2pt] ([xshift=-2.6ex]0x|-Y1) to ([xshift=-2.6ex]0x|-Y2);
		\node at ([yshift=0.5ex,xshift=-1.4ex] 0x|-Y2) {$\scriptstyle f$};
		\draw[=,double,double distance=2pt] (A1|-Y1) to (A1|-Y2);
		\node at ([yshift=0.5ex,xshift=1.2ex] A1|-Y2) {$\scriptstyle f$};
		\draw[=,double,double distance=2pt] (Cx|-Y1) to (Cx|-Y2);
		\node at ([yshift=0.5ex,xshift=1.2ex] Cx|-Y2) {$\scriptstyle f$};
		\draw[=,double,double distance=2pt] (Dx|-Y1) to (Dx|-Y2);
		\node at ([yshift=0.5ex,xshift=1.2ex] Dx|-Y2) {$\scriptstyle f$};
		
		\draw[=,double,double distance=2pt,dashed,nfold,shorten >=2pt] (Ax1|-Y1) to (Bx2|-Y2);
		\node at ([yshift=1ex,xshift=0.2ex] Bx2|-Y2) {$\scriptscriptstyle 1$};
		\draw[=,double,double distance=2pt,dotted,nfold] (Bx1|-Y1) to (Ax2|-Y2);
		\node at ([yshift=1ex,xshift=-0.4ex] Ax2|-Y2) {$\scriptscriptstyle 0$};
		\draw[=,double,double distance=2pt,dashed,nfold,shorten >=2pt] (B1|-Y1) to (Cx2|-Y2);
		\node at ([yshift=1ex,xshift=0.2ex] Cx2|-Y2) {$\scriptscriptstyle 2$};
	\end{tikzpicture}
	\captionsetup{width=.95\linewidth}
	\caption{At the top, a general small-step simulation diagram.
	At the bottom, a small-step simulation diagram for loop unswitching.
	The original program execution is shown directly above the transformed execution; vertical double lines indicate which states need to match with each other.
	Environments and traces are omitted for brevity.}
	\label{fig-unswitching-simulation}
\end{figure}
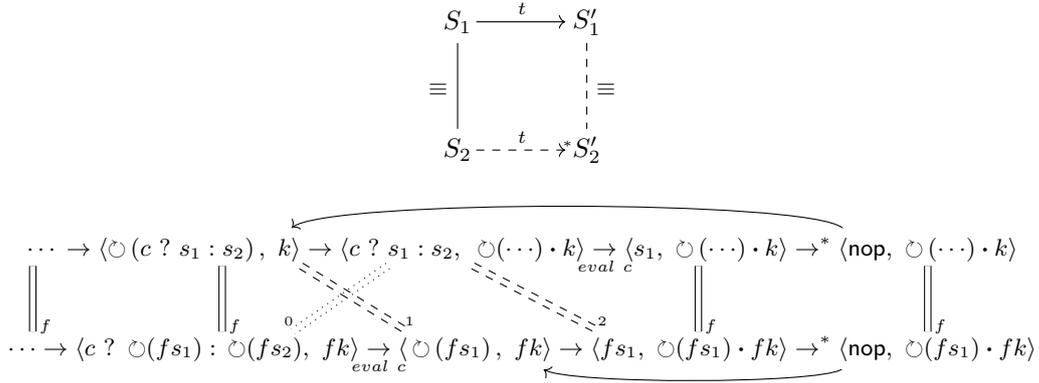

\paragraph*{Small-Step}

We now sketch the proof of $f$'s correctness in a small-step fashion.
Therefore a \emph{small-step simulation} together with an appropriate matching relation between source and target states must be established \cite{compcert-3}.
The nature of a small-step simulation is illustrated at the top of \cref{fig-unswitching-simulation}:
given two matching states $S_1 \equiv S_2$ and a step in the source program $S_1 \SSStep t S_2$, a state $S'_2$ has to be found such that $S_2 \SSStar t S'_2$ and $S'_1 \equiv S'_2$.
No strict one-to-one correspondence of steps is required: the target program may perform zero or multiple steps for a single step in the source program.

First, a match predicate $=_f$ between states of the original program $p$ and the transformed program $f p$ must be defined.
Outside of a special loop, this is simply $\SSState s k \stateAbstr =_f \SSState {f s} {f k} \stateAbstr$, where we lift $f$ inductively to continuations.
When encountering a special loop, things become interesting.
This process is illustrated at the bottom of \cref{fig-unswitching-simulation}:
Before entering the loop, execution is at the point shown on the left in the figure,
$\SSState {\LOOP\left(\IFTHENELSE{c}{s_1}{s_2}\right)} k \stateAbstr =_f \SSState {\IFTHENELSE{c}{\LOOP(f s_1)}{\LOOP(f s_2)}} {f k} \stateAbstr$.
In $p$, the next step first unfolds the loop and only afterwards, $c$ is evaluated.
In contrast, $c$ is evaluated immediately in $f p$.
Thus, $f p$ must wait until $p$ has taken two steps; the predicate $=_0$ is needed to allow matching the intermediate state.
Once $c$ has been evaluated in $p$, two steps can be performed at once in $f p$, and we return to the standard $=_f$ matching, which is maintained throughout the remainder of the loop.

If the loop executes normally (i.~e., it is not terminated by $\EXIT$ and does not diverge), execution in both $p$ and $f p$ reaches a $\SKIP$ statement, shown on the right in the figure.  
From there, both $p$ and $f p$ return to the beginning of their respective loops.  
However, these states do not match under $=_f$, so we additionally require $=_1$ and $=_2$ to maintain the matching before returning to the contents of the loops which again match via $=_f$.

To correctly consider special loops, the match definition must now also be lifted to the continuations, thereby carrying information about the evaluation of $c$ and its independence from the current statement and the relevant part of the continuations, in order to show that $\eval c \stateAbstr$ will always evaluate to true (or to false) in $p$.
During the execution of a special loop, a match between continuations could look as follows (where the $a_i$ are sub-statements of $s_1$):
\begin{gather*}
	a_n \kseq \dots \kseq a_1 \kseq \LOOP (\IFTHENELSE{c}{s_1}{s_2}) \kseq k
	\cong_\stateAbstr
	f a_n \kseq \dots \kseq f a_1 \kseq \LOOP (f s_1) \kseq f k, \\
	\emph{ given } \istrue (\eval c \stateAbstr) \wedge \indep c {s_1} \wedge \forall i, \indep c {a_i}.
\end{gather*}

It becomes significantly more fiddly to correctly consider \emph{nested} special loops, because the multiple different conditions are then independent over different parts of the continuation.

With the complete match definition, it can be shown that each step in $p$ corresponds to one or two steps in $f p$, or that two steps in $p$ correspond to one step in $f p$, and that the matching together with the $indep$ and $eval$ invariants are preserved.  
From this, the small-step simulation follows, which in turn establishes forward preservation $p \fwpreserv f p$.\\

A part of the complexity of this match definition comes from the fact that $\Cminor$ is defined inductively rather than via a control-flow graph.
Nevertheless, the match predicate must generally fulfill several functions simultaneously: in addition to matching the states, it must carry along all invariants that hold between $p$ and $f p$.
As a result, both the match predicate and the actual simulation proof often become bloated.
Notice that \cref{lemma-indep-spec}, the compact specification of $indep$, has not been used in the small-step proof.
Instead, in each part of the simulation proof -- that is, for each different step-match combination -- it must be shown that both independence and the result of $eval$ are preserved for that specific case.
This effectively breaks up and scatters \cref{lemma-indep-spec} throughout the entire simulation proof, reducing both separation of concerns and opportunities for proof (i.~e., code) reuse.
	\subsection{Loop Unrolling}

\definecolor{codebg}{rgb}{0.95,0.95,0.95}
\begin{figure}[H]
\centering
\begin{minipage}{0.35\textwidth}
\begin{minted}[bgcolor=codebg, tabsize=4]{c}
int x=1;
for (int i=1; i<11; i++)
	x = x*i;
return x;
\end{minted}
\end{minipage}
\begin{minipage}{0.08\textwidth}
\end{minipage}
\begin{minipage}{0.25\textwidth}
\begin{minted}[bgcolor=codebg]{c}
int x=1; int i=1;
x = x*i;
i++;
...
x = x*i;
i++;
return x;
\end{minted}
\end{minipage}
\begin{minipage}{0.08\textwidth}
\end{minipage}
\begin{minipage}{0.22\textwidth}
\begin{minted}[bgcolor=codebg]{c}
return 3628800;
\end{minted}
\end{minipage}
\caption{Left: loop that calculates $10!$. Middle: code after loop unrolling. Right: code after loop unrolling and CompCert's constant propagation pass.}
\label{fig-unrolling-example}
\end{figure}

\newcommand{\loopbody}{body_m}

\cref{fig-unrolling-example} showcases our verified loop unrolling pass in action, working together with constant propagation to statically calculate the value of $10!$.
In general, consider a loop of the form \textit{for (i=0; i<\textbf{m}; i++) do stmt} where \textbf{m} is a constant.
In \Cminor, the loop body may look like
$
\loopbody := \big(\SEQ{\SEQ{\big(\IFTHENELSE{i<^{\smash?}\textbf{m}}{\EXIT 0}{\SKIP}\big)}{stmt}}{(\STORE i {i+1})} \big).
$
The transformation takes statements of the form $\SEQ{(\STORE i 0)}{\BLOCK {\LOOP \loopbody}}$
and unrolls them into ${\SEQ{(\STORE i 0)}{\rep_m (\SEQ {stmt} {(\STORE i {i+1})})}}$
where $\rep_0 s := \SKIP$ and $\rep_{i+1} s := \SEQ s {\rep_i s}$.
As above, we define the transformation $f$ inductively over the statement structure, performing the loop unrolling when a matching statement is found and a few conditions are fulfilled, most notably that the condition $i <^? \mathbf{m}$ is independent of $stmt$ -- reusing the $indep$ predicate from before -- and that $stmt$ itself cannot exit out of the loop.
Bodies of an unrolled loop are not recursively unrolled themselves to avoid code explosion.

To show preservation in big-step style, we begin by introducing a notion $\BSLoop n t$. Used with an $\outExit{\!}{\!}\!$ or $\outPartial$ outcome, it means that the loop was exited or evaluation was truncated during the $n$'th iteration:

\begin{minipage}{\linewidth}
	\centering
	\rulefig{.25} {
		\BSState s \stateAbstr \BSEval t o
	}{
		\BSState s \stateAbstr \BSLoop 0 t o 
	}
	\rulefig{.4} {
		\BSState s \stateAbstr \BSEval {t_1} \outOk {\stateAbstr'} \hspace{2em} \BSState s {\stateAbstr'} \BSLoop i {t_2} o
	}{
		\BSState s \stateAbstr \BSLoop {i+1} {t_1 \traceConcat t_2} o 
	}
\end{minipage}


A similar notion $\BSLoopinf{n}{\tau}$ for divergence is introduced, meaning that the loop body diverged during its $n$'th iteration.
The proof of preservation then proceeds in four main steps:
\begin{enumerate}
	\item Show a few basic statements about a single execution of $\loopbody$ w.~r.~t.~the value of $\stateAbstr[i]$.
	\item Count the exact amount of performed loop iterations:
	Given $\stateAbstr[i] = 0$,
	\iftoggle{lipics}{\begin{alphaenumerate}}{\begin{itemize}}
		\item \makebox[\iftoggle{lipics}{5.6cm}{5.3cm}][l]{$\BSState {\LOOP \loopbody} \stateAbstr \BSEval t \outExit 0 \stateAbstr \Rightarrow$}
			$\BSState {\loopbody} \stateAbstr \BSLoop m t \outExit 0 \stateAbstr$
		\item $\BSState {\LOOP \loopbody} \stateAbstr \BSEval t \outPartial \iftoggle{lipics}{\hspace{0.8mm}}{} \Rightarrow  \exists l \leq m, \BSState {\loopbody} \stateAbstr \BSLoop {l} t \outPartial$
		\item \makebox[\iftoggle{lipics}{3.93cm}{3.7cm}][l]{$\BSState {\LOOP \loopbody} \stateAbstr \BSEvalinf \tau$}
			$\Rightarrow \exists l \leq m, \BSState {\loopbody} \stateAbstr \BSLoopinf {l} \tau$
	\iftoggle{lipics}{\end{alphaenumerate}}{\end{itemize}} 
	
	\item Unroll the loop as often as it was executed:
	\iftoggle{lipics}{\begin{alphaenumerate}}{\begin{itemize}}
		\item $\hspace{1.16cm} \BSState {\loopbody} \stateAbstr \BSLoop {m} t \outExit 0 \stateAbstr \Rightarrow \BSState {\rep_{m} {(\SEQ {stmt} {(\STORE i {i+1})}})} \stateAbstr \BSEval t \outOk \stateAbstr$
		\item $\forall l \leq m, \BSState {\loopbody} \stateAbstr \BSLoop {l} t \outPartial \hspace{0.4mm} \Rightarrow \BSState {\rep_l {(\SEQ {stmt} {(\STORE i {i+1})}})} \stateAbstr \BSEval t \outPartial$
		\item $\forall l \leq m, \BSState {\loopbody} \stateAbstr \BSLoopinf {l} \tau \hspace{1.22cm} \Rightarrow \BSState {\rep_l {(\SEQ {stmt} {(\STORE i {i+1})}})} \stateAbstr \BSEvalinf \tau$
	\iftoggle{lipics}{\end{alphaenumerate}}{\end{itemize}}

	\item Further unroll if necessary ($m$ times in total):
	\iftoggle{lipics}{\begin{alphaenumerate}}{\begin{itemize}}
		\item $\forall s, i \leq j, \BSState {\rep_i s} \stateAbstr \BSEval t \outPartial \Rightarrow \BSState {\rep_j s} \stateAbstr \BSEval t \outPartial$
		\item $\forall s, i \leq j, \BSState {\rep_i s} \stateAbstr \BSEvalinf \tau \hspace{1.33cm} \Rightarrow \BSState {\rep_j s} \stateAbstr \BSEvalinf \tau$
	\iftoggle{lipics}{\end{alphaenumerate}}{\end{itemize}}
\end{enumerate}

If the loop $\LOOP \loopbody$ terminates, it corresponds to exactly $m$ loop iterations (given $\stateAbstr[i]=0$).
This is not true when considering divergence or partial execution: there, we have \emph{at most} $m$ iterations before the loop body diverges or termination is truncated, respectively.
This requires us to prove several statements separately for the terminating, partial and diverging cases.
After bringing them in a slightly more general form, the statements of steps 2 and 3 can be proven inductively using reverse induction on the initial value of $\stateAbstr[i]$ together with the lemmas established in step 1.
Notice that a loop cannot terminate with an $\outOk{\!}$ outcome, therefore only $\outExit{\!}{\!}\!\!$ is relevant.
In step 1, \cref{lemma-indep-spec} is used as a key ingredient to reason about the value of $\stateAbstr[i]$.

Combined, we get the desired preservation:
any behavior of ${\SEQ {(\STORE i 0)} {\BLOCK {\LOOP \loopbody}}}$ induces the same behavior of the unrolled loop ${\SEQ {(\STORE i 0)} {\rep_m (\SEQ {stmt} {(\STORE i {i+1})})}}$.
For non-loop statements, preservation is straightforward due to the inductive definition --
we just have to be a little careful with the (co)induction hypotheses as we chose to not recursively unroll loops.

%
%
%
%

\subsection{While True}

A fun little example of the big-step approach is a transformation that eliminates the bodies of silent loops that never exit.
A statement is silent when it never emits any event, and it cannot exit if it does not contain an $\EXIT$ statement.
So, a loop $\LOOP s$ is transformed into $\LOOP \SKIP$, given $noexit\ s \wedge silent\ s$.
While it may be argued that this transformation is pointless, it is nonetheless correct: it refines the behavior of the code.

\begin{lemma}
  $noexit\ s \wedge \BSState {\LOOP s} \stateAbstr \BSEval t o \Rightarrow o = \outPartial$
  \label[lemma]{lem-noexit}
\end{lemma}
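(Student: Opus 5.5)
We first establish an auxiliary lemma about the loop body alone: if $noexit\ s$ and $\BSState s \stateAbstr \BSEval t o$, then $o$ is either $\outOk{\stateAbstr'}$ for some $\stateAbstr'$ or $\outPartial$, i.~e., never of the form $\outExit n {\stateAbstr'}$. This is proven by induction on the derivation of $\BSEval t$, where $noexit$ is taken to be the inductively defined predicate ``$s$ contains no $\EXIT$ statement'' and is inherited by all immediate substatements. Each rule is checked in turn. The rule for $\EXIT n$ is impossible, since $noexit\ (\EXIT n)$ fails. The rules for $\SKIP$, register operations and $\EXTCALL$ give $\outOk$ or $\outPartial$, and the truncation rule gives $\outPartial$. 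For if-then-else, $[\kseq_{exit}]$, $[\kseq_{cont}]$, $[\LOOP_{exit}]$ and $[\LOOP_{cont}]$, the outcome is that of a premise on a substatement (or on the loop itself with a smaller derivation), so the induction hypothesis applies directly. For blocks, $[\BLOCK{}_{exit}^1]$ yields $\outOk$, and $[\BLOCK{}_{exit}^2]$ would require the body to produce $\outExit{(n+1)}{\!}$, which the induction hypothesis excludes.

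With this in hand, the lemma itself follows by induction on the derivation of $\BSState{\LOOP s}{\stateAbstr}\BSEval t o$, generalizing over $\stateAbstr$ and $t$. Only three rules can conclude a judgment about $\LOOP s$. The truncation rule immediately gives $o = \outPartial$. For $[\LOOP_{exit}]$, the body evaluates $\BSState s \stateAbstr \BSEval t o$ with $o \neq \outOk \_$, and the auxiliary lemma then forces $o = \outPartial$. For $[\LOOP_{cont}]$, the outcome $o$ is that of the recursive judgment $\BSState{\LOOP s}{\stateAbstr_2}\BSEval{t_2} o$, so the induction hypothesis gives $o = \outPartial$.

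The main obstacle is the bookkeeping in the auxiliary lemma. Its statement must be phrased generally enough, for arbitrary statements, that the induction hypothesis applies to the nested loop and block cases. In Rocq, this means inducting on the judgment with the statement kept as a variable and inverting $noexit$ at each step. The block case is the only one needing an actual argument, namely that the exit counter cannot become $\outExit 0$ from inside, and this is handled by the induction hypothesis. Silence of $s$ is not needed for this lemma; it only matters for the later trace argument in the preservation proof.
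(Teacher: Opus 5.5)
Your proof is correct: you first show by induction on the $\BSEval{t}$ derivation that an $\EXIT$-free statement never yields an $\outExit{n}{\stateAbstr'}$ outcome, and then do a case analysis on the three rules that can conclude a judgment about $\LOOP s$. This is the straightforward induction the paper relies on; it states \cref{lem-noexit} without a written proof and defers to the Rocq development.
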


\begin{lemma}
  $silent\ s \Rightarrow \big( [\BSState s \stateAbstr \BSEval t o \Rightarrow t = \emptyTrace] \wedge [\BSState s \stateAbstr \BSEvalinf \tau \Rightarrow \tau = \emptyTrace] \big)$
  \label[lemma]{lem-silent}
\end{lemma}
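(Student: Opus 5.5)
The statement has two conjuncts, and I would prove them separately: the terminating/partial conjunct by induction, the divergence conjunct by case analysis on the last rule together with a small inductive argument on the guard counter. Throughout I take $silent\ s$ to be a syntactic predicate defined structurally: $\EXTCALL$ is not silent, and sequences, ifs, blocks and loops are silent exactly when all their sub-statements are. Every sub-statement of a silent statement is therefore silent.

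\textbf{Termination and partial outcomes.} I would induct on the derivation of $\BSState s \stateAbstr \BSEval t o$, generalizing over $s$ but keeping $silent\ s$ as a hypothesis carried into each case. The cases go as follows.
\begin{itemize}
\item The truncation rule $s \BSEval \emptyTrace \outPartial$, and the rules for $\SKIP$, $\EXIT n$ and register operations, produce $\emptyTrace$ directly.
\item The $\EXTCALL$ rules cannot occur, since $\EXTCALL$ is not silent.
\item The if-rules, $[\LOOP_{exit}]$, $[\kseq_{exit}]$ and both block rules pass the trace of a single premise through unchanged. Silence of the relevant sub-statement plus the induction hypothesis gives $t=\emptyTrace$.
\item $[\LOOP_{cont}]$ and $[\kseq_{cont}]$ produce $t_1 \traceConcat t_2$. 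Both premises concern silent statements ($s$, or $\LOOP s$, resp.\ $s_1,s_2$), so both traces are empty and $\emptyTrace\traceConcat\emptyTrace=\emptyTrace$.
\end{itemize}

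\textbf{Divergence.} Here coinduction on $\BSEvalinf{}$ is the wrong tool, because the goal $\tau=\emptyTrace$ is not coinductive. Instead I would show, by strong induction on the counter $n$ (and, inside it, the structure of $s$), that for a silent $s$ with $\BSState s \stateAbstr \BSEvalinfx \tau n$, the trace $\tau$ has no first event. Equivalently, I would prove by induction on the length $k$ that no prefix of $\tau$ of length $k+1$ exists; the case $k=0$ already yields $\tau=\emptyTrace$.

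Assume $\tau\neq\emptyTrace$ and invert the last rule. In the rules $[\kseq'_{cont}]$ and $[\LOOP'_{cont}]$ the emitted prefix $t$ comes from a terminating judgment of a silent statement, so by the first part $t=\emptyTrace$. In every other rule the emitted prefix is $\emptyTrace$ by construction. Since $\tau\neq\emptyTrace$ and the emitted prefix is empty, the guard forces the recursive premise to carry the same $\tau$ with counter $m<n$. The premise again concerns a silent statement (a sub-statement, or $\LOOP s$ itself), so the induction hypothesis on $m$ applies and yields a contradiction. The base case $n=0$ is also contradictory: the guard would require $m<0$.

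The main obstacle is this divergence case. The guard counter is exactly what rules out fake non-empty traces, as with $(\LOOP\SKIP)\BSEvalinf\omega$, so the argument must go through the counter rather than through coinduction. In Rocq the strong induction on $n$ has to be set up with the statement generalized over $s$, $\stateAbstr$ and $\tau$ before inverting the coinductive judgment.
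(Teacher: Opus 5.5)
Your proof is correct. The terminating/partial conjunct is a routine induction on the derivation. For divergence, strong induction on the guard counter is the right tool: for a silent statement every rule emits an empty prefix, either by construction or by the first conjunct. So whenever the remaining trace is nonempty, the guard forces $m<n$.

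The paper states this lemma without a written proof (it is only discharged in Rocq), so there is nothing to compare step by step. Your argument is the one the $guard$ mechanism is designed to enable.

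Two parts of your plan are superfluous. The nested induction on the structure of $s$ and the induction on prefix length $k$ are never used, because the counter alone decreases in every rule. This includes $[\LOOP'_{cont}]$, where the premise concerns $\LOOP s$ itself rather than a sub-statement.
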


With the above lemmas characterising $noexit$ and $silent$, the big-step preservation proof almost becomes trivial.
In particular, no attention has to be paid to the terminating case because of \cref{lem-noexit}.
For the partial and diverging case we use the fact that the trace is empty (\cref{lem-silent}).
The simple facts $s \BSEval \emptyTrace \outPartial$ and $\LOOP \SKIP \BSEvalinf \emptyTrace$ then finish the proof.



	\section{Discussion and Related Work}
\label{section-discussion}

We argue that both forms of semantics have their own strengths and should be used where they fit best.
While small-step semantics work well for many transformations, they are not as natural as big-step semantics for structural transformations like loop optimizations.
An inductively defined langauge such as \Cminor is a perfect fit for these kinds of transformations.
An advantage is that large parts of the correctness arguments follow immediately by structural induction.
Furthermore, big-step reasoning avoids the need to match individual reduction steps.
As a result, no complex match relation with explicitly threaded invariants over intermediate states is required.
Instead, proofs can often be decomposed into smaller, logical parts, increasing modularity and reusability.
For example, \cref{lemma-indep-spec} (indep\_spec) was used as an essential component for both loop unswitching and loop unrolling.

However, big-step proofs require duplicated effort, as the inductive and coinductive statements cannot be combined into a single one.
Also, while the partial case can sometimes be subsumed in the terminating case, sometimes it has to be treated separately.
Custom proof tactics can help reduce code duplication however -- our full definition and proof of loop unrolling only span about 700 lines of Rocq (excluding indep\_spec), despite having to show essentially three behavioral preservation statements.
Nevertheless, coinductive proofs for divergence are often technically subtle and difficult to construct correctly.

In contrast, small-step semantics naturally accommodate divergence and support uniform simulation-based arguments, but at the cost of more intricate stepwise reasoning and heavier invariant management, reducing the possibility for modularity and abstraction.
In particular, invariants (such as $indep$ for loop unswitching) often have to be threaded through the step relation and proven on-the-fly, which makes it harder to split a small-step proof into distinct, reusable parts.

Our behavioral framework requires semantics to be determinate to achieve full behavioral equivalence.
We accept this as most intermediate languages of CompCert's backend are determinate, including \Cminor and \texttt{RTL}.
Of course, we are not tied to \Cminor in our approach -- its major disadvantage is that it lacks support for a goto statement.
A better candidate could be \texttt{RTL}, which is not inductively defined but consists of a control-flow graph.
Its big-step semantics could abstract function calls and therefore enable reasoning about inter-procedural optimizations such as inlining.
A big-step loop construct could also be introduced which would then enable loop optimizations. \\


Often, loop transformations such as unrolling change the control-flow graph in a systematic way: there is a surjective mapping from the transformed CFG onto the original one that preserves the edges.
Gourdin, Monniaux and others \cite{cfg-morphisms-1, cfg-morphisms-2, cfg-morphisms-3} introduced a way to leverage this fact and verify such transformations on a small-step semantics similar to \texttt{RTL}.
Instead of directly showing a small-step simulation, they use translation validation \cite{translation-validation}: a powerful technique where witnesses are generated alongside the transformation that can be used to validate the correctness of a transformation a posteriori.
As a validator, they use an interpreter that performs a symbolic small-step execution, and they provide hints in order to verify the required matching invariants.
Building on this, they were able to verify several new complex transformations, including loop-independent code motion (based upon loop rotation) as well as more general lazy code motion strategies.

While these approaches give powerful results for a large class of transformations, they are more complicated than our ``native'' big-step approach.
As they require CFG morphisms, some transformations such as reordering of operations, loop fusion or loop unswitching do not natively fit into their framework but are, in our opinion, better suited for big-step proofs.
	
	\bibliography{bib}
	
\end{document}